\newtheorem{theorem}{Theorem}[section]
\newtheorem{corollary}[theorem]{Corollary}
\newtheorem{lemma}[theorem]{Lemma}
\newtheorem{proposition}[theorem]{Proposition}
\theoremstyle{definition}
\newtheorem{definition}[theorem]{Definition}
\newtheorem{remark}[theorem]{Remark}
\newtheorem{example}[theorem]{Example}
\renewcommand{\bf}{\textbf}
\newcommand{\lp}{\langle}
\newcommand{\rp}{\rangle}
\newcommand{\Z}{\mathbb{Z}}
\newcommand{\cM}{\mathcal{M}}
\newcommand{\cC}{\mathcal{C}}
\newcommand{\hC}{\widehat{C}}
\newcommand{\vb}{\mathbf{b}}
\newcommand{\va}{\mathbf{a}}
\newcommand{\vx}{\mathbf{x}}
\newcommand{\vzero}{\mathbf{0}}
\newcommand{\F}{\mathbb{F}}
\newcommand{\Aut}{Aut}
\newcommand{\Mon}{\textrm{Mon}}
\newcommand{\GL}{\operatorname{GL}}
\newcommand{\SD}{\operatorname{SD}}
\newcommand{\GO}{\operatorname{GO}}
\newcommand{\CO}{\operatorname{CO}}
\renewcommand{\O}{\operatorname{O}}
  \newcommand{\Block}{Block} 
\newcommand{\Mat}{\textrm{Mat}}
\newcommand{\Vector}{\textrm{Vec}}
\newcommand{\Canon}{\textrm{Canon}}
\newcommand{\LStab}{Stab}
\newcommand{\LGOStab}{GOStab}
\newcommand{\tr}{\operatorname{Tr}}
\newcommand{\Matlm}{\F_q^{l \times m}}
\newcommand{\rank}{\operatorname{rank}}
\newcommand{\LEquiv}{Equiv}
\newcommand{\LAut}{Aut}
\title[Enumerating Self-Dual Matrix Codes]{An Enumeration of the Equivalence Classes of Self-Dual Matrix Codes}
\author[Katherine Morrison]{}
\subjclass{Primary: 94B60, 11T71}
\keywords{Matrix codes, equivalence maps, isometries, self-dual codes, mass formula, orthogonal group, double cosets}
\email{ katherine.morrison@unco.edu}%
\thanks{This work was supported in part by NSF grants grants DMS-0735099, DMS-0903517 and DMS-0838463, as well as DOE grant P200A090002.}
\begin{document}
\maketitle

\centerline{\scshape Katherine Morrison }
\medskip
{\footnotesize
    \centerline{School of Mathematical Sciences }
   \centerline{University of Northern Colorado}
   \centerline{501 20th St, CB 122}
   \centerline{ Greeley, CO 80639, USA}
}

\medskip

 \centerline{(Communicated by xxxxxx)}

\begin{abstract}
As a result of their applications in network coding, space-time coding, and coding for criss-cross errors, matrix codes have garnered significant attention; in various contexts, these codes have also been termed rank-metric codes, space-time codes over finite fields, and array codes.  We focus on characterizing matrix codes that are both efficient (have high rate) and effective at error correction (have high minimum rank-distance).  It is well known that the inherent trade-off between dimension and minimum distance for a matrix code is reversed for its dual code; specifically, if a matrix code has high dimension and low minimum distance, then its dual code will have low dimension and high minimum distance.  With an aim towards finding codes with a perfectly balanced trade-off, we study self-dual matrix codes.  In this work, we develop a framework based on double cosets of the matrix-equivalence maps to provide a complete classification of the equivalence classes of self-dual matrix codes, and we employ this method to enumerate the equivalence classes of these codes for small parameters.
\end{abstract}

\section{Introduction}\label{intro}

Codes consisting of matrices over a finite field with the rank distance have been employed in a variety of applications, although often under different names and with a different focus.  Recently, they have garnered attention in the context of error control for network coding due to their role in the construction of \emph{lifted rank-metric codes} \cite{KK08,SKK08}.  Previously, their potential use in space-time coding has been investigated by Grant and Varanasi \cite{GV08, GVSpaceTime}; in this context, these codes are known as \emph{space-time codes over a finite field}.  Finally, these codes, restricted to square matrices, have also been studied by Blaum et al.\ \cite{Blaum} and Roth \cite{Roth} in the context of memory chip arrays and magnetic tape recording, where it is essential to protect against criss-cross errors; in that work, the codes were known as \emph{array codes}.  In each of these cases, the codes of interest were those consisting of matrices over a finite field with the relevant metric being the \emph{rank distance}: the distance between matrices $A$ and $B$ is $d(A, B) := \rank(A - B)$.  To distill the primary coding principles from each of these contexts and make precise the types of codes we investigate here, we introduce the unifying terminology \emph{matrix codes}, for which we will only consider the rank distance.  

Given the many contexts in which matrix codes arise, it is essential that we better understand the structure and distance properties of these codes, with a focus on codes that are both efficient, i.e.\ have high rate, and effective at error correction, i.e.\ have high minimum distance.  The primary focus of this work is to provide a framework for classifying matrix codes based on these defining properties and to perform such a classification for the class of matrix codes that are \emph{self-dual}.   We restrict our focus to self-dual codes because the analogue of the MacWilliams Identities for matrix codes demonstrates that the inherent trade-off between dimension and minimum distance for a code is reversed for its dual code: if a code has high dimension and low minimum distance, then its dual code will have low dimension and high minimum distance.  

In Section \ref{background}, we begin with the notion of duality for matrix codes, highlighting the relationship between this and the notion of duality in the block code case.  We then turn toward classifying self-dual matrix codes in terms of their structural and distance properties.  This leads us to review previous work on the notion of \emph{equivalence} for matrix codes and on the collection of linear equivalence maps for matrix codes.  

Next, we refine the notion of equivalence maps to apply specifically to self-dual matrix codes.  This focus on self-dual matrix codes necessitates an additional property in our definition of equivalence, namely that an equivalence map sends a self-dual code to another self-dual code.  Toward this end, in Section \ref{equivmapsselfdual}, we characterize the subset of matrix-equivalence maps that commute with the dual and thus maintain the property of self-duality. 

In Section \ref{massformulasection}, we give a matrix-code analogue of the mass formula, which enables one to determine when a classification of self-dual codes for a given set of parameters is complete.  We then outline an alternative method for enumerating the equivalence classes of self-dual matrix codes in Section \ref{doublecosetsmethod}, building off an approach proposed in \cite{Janusz07} to enumerate inequivalent self-dual block codes using double-cosets.  Finally, we carry out this enumeration for matrix codes of small lengths over small finite fields and give the results in Section \ref{Enumerateselfdual}.

\section{Background}\label{background}
As described above, matrix codes arise in a number of important contexts indicating the need to characterize collections of matrix codes with good distance properties.   In particular, we seek to classify the collection of self-dual codes since 
the analogue of the MacWilliams Identities \cite{Delsarte, GV08} implies they have significant potential for good distance distributions.  To achieve this goal, we begin with the notion of duality for linear matrix codes, where a matrix code $C \subseteq \Matlm$ is deemed \emph{linear} if the collection of codewords forms a vector space over $\F_q$.

\begin{definition} [\cite{GV08}]\label{DualDef}
Let $C\subseteq \Matlm$ be a linear matrix code.  The \emph{matrix dual code of $C$} is given by 
\[
C^{\perp_{\Mat}} = \{Y \in \Matlm ~|~\tr(XY^\top) = 0 \text{ for all }
X \in C\}.
\]
We say that $C$ is a \emph{self-dual matrix code} if $C= C^{\perp_{\Mat}}$.
\end{definition}

To enable a comparison of the duals of matrix codes with those of block codes, we must introduce a natural map translating between these types of codes.  
\begin{definition}\label{rhomap}
For $X=[x_{ij}]\in \F_q^{l \times m}$, the \emph{extended row vector corresponding to $X$} is the vector 
\[
\rho(X) = (x_{1 1}, \dots, x_{1 m}, \dots, x_{l 1},\dots,
x_{l m}) \in \F_q^{lm}
\]
formed by concatenating the rows of the $l\times m$ matrix $X$.  If $C\subseteq \Matlm$ is a matrix code, the \emph{extended block code of $C$} is given by
\[
\rho(C)= \{\rho(X) \, | \, X \in C\} \subseteq \F_q^{lm}.
\]
\end{definition}

Since 
$\tr(XY^\top) = \rho(X)\cdot \rho(Y)$, where $\cdot$ denotes the standard dot product, the matrix dual code of $C$ is simply the inverse image under $\rho$ of the standard block dual code of $\rho(C)$, i.e.\ 
\[
C^{\perp_{\Mat}} = \rho^{-1}\left(\rho(C)^\perp \right).
\]
Thus, there is a natural correspondence between the duals of matrix codes and the duals of block codes.  The map $\rho$ also enables a notion of a generator matrix for a linear matrix code:
we will say that $G$ is a \emph{generator matrix for a linear matrix code $C$} if $G$ is a generator matrix for the corresponding extended block code $\rho(C)$, i.e.\ for all $X \in C$, $X = \rho^{-1}(\vx)$ for some $\vx \in \text{rowspace}(G)$.  

It is important to note that although the dual codes coincide, the distance metrics for block codes are different from those for matrix codes.  Throughout, we consider only the Hamming distance for block codes and the rank distance for matrix codes.   Given this difference in metrics, the literature regarding the distance distributions of the dual codes of block codes does not apply directly.  In particular, one major result in this area is the MacWilliams Identities for block codes, which give an explicit relationship between the distance distribution of a code and that of its dual code.  Although these results do not apply to matrix codes, Delsarte as well as Grant and Varanasi have proven analogues of these identities for the case of matrix codes \cite{Delsarte, GV08}.  An important consequence of these results is that the inherent trade-off between dimension and minimum distance for a matrix code is reversed for its dual code.  
As we seek matrix codes with a balanced trade-off between dimension and minimum distance, we turn to the collection of self-dual matrix codes.  In particular, we seek to classify the collection of self-dual codes based on their structure and distance properties.  We must first review a notion of equivalence for matrix codes generally, and then we develop a refined notion of equivalence for self-dual matrix codes.  

\subsection{Equivalence of Matrix Codes}\label{equivalence}
Intuitively, two codes should be considered equivalent if they share all the same properties and structure.  In particular, equivalent codes should have the same distance distribution and the same number of codewords, or same dimension if the codes are linear.   In the case of block codes, the notion of code equivalence was made more tractable by restricting the definition to declare two block codes to be \emph{equivalent} if and only if there exists a linear, or more generally a semi-linear, invertible map between them that preserves (Hamming) weight.  For this paper, we will restrict to considering linear equivalence only, but all the following results can be generalized to the case of semi-linear equivalence as well; the interested reader may find these results in \cite{MyThesis}.

It is a consequence of the MacWilliams Extension Theorem \cite{MacWilliamsThesis} that the monomial matrices are the only Hamming-weight preserving linear maps, where a \emph{monomial matrix} is any matrix of the form $DP$ for some invertible diagonal matrix $D$ and permutation matrix $P$.  Thus, the collection of linear equivalence maps for block codes consists of only these matrices.

A notion of equivalence for matrix codes was first given in \cite{MyEquivMapsPaper}.  Two matrix codes are \emph{linearly equivalent} if there exists an invertible linear map between them that preserves rank weight of all matrices in $\Matlm$; we call such a map a \emph{linear matrix-equivalence map}.  Note that this requirement of preserving rank weight for \emph{all} matrices is stronger than simply requiring the map to preserve rank weight for the matrices in a given code.  In the case of Hamming weight, these two notions coincide as a result of the MacWilliams Extension Theorem \cite{MacWilliamsThesis}; however, no analogue of this extension theorem holds for rank weight \cite{Heide}. 

The collection of rank-preserving linear maps on $\Matlm$ consists of compositions of multiplication on the left by matrices in $\GL_l(\F_q)$, multiplication on the right by matrices in $\GL_m(\F_q)$, and, when $l=m$, transposition  \cite{Moyls, MyEquivMapsPaper}.   To further the analogy with linear maps on block codes, we translate the action of maps on $\Matlm$ to maps on $\F_q^{lm}$; this translation will also prove useful for the enumeration of self-dual matrix codes.  For this purpose, we return to the map $\rho$ from Definition \ref{rhomap}, which takes a matrix to the extended row vector formed by concatenating the rows of the matrix.  In this context, we are able to describe the linear matrix-equivalence maps in terms of linear maps that
act on the right only, and thus situate the maps in the same universe.  Left multiplication by $L  \in \GL_l(\F_q)$ on
$A \in \Matlm$ corresponds to right multiplication by $(L^\top \otimes I_m)$ on $\va=\rho(A)$ where $I_m$ is the $m \times m$
identity matrix and $\otimes$ denotes the Kronecker product of
matrices.  
Similarly, right multiplication by $M \in \GL_m(\F_q)$ on $A \in \Matlm$ corresponds to right multiplication by $(I_l \otimes M)$ on $\rho(A)$.
Finally, transposition corresponds to multiplication on the right by the $m^2 \times m^2$ block matrix $T=[E_{ji}]_{ij}$ whose $(i,j)^\text{th}$ block is the $m \times m$ matrix $E_{ji}$, for $1 \leq i,j\leq m$, where $E_{ji}$ is the matrix of all zeros with a single 1 in the $(j,i)^\text{th}$ entry.   

This change of context motivates the following extension of the definition of matrix-equivalence maps.
 \begin{definition}\label{equivmapsonvectors}
We say that  $f:\F_q^{lm} \to \F_q^{lm}$ is a \emph{linear $l \times m$ matrix-equivalence map on extended row vectors} if there is some linear matrix-equivalence map $g:\Matlm \to \Matlm$ such that, for all $A \in \Matlm$, $f(\rho(A))= \rho(g(A))$, i.e.\  $f= \rho \circ g \circ \rho^{-1}$.
We denote the collection of linear $l \times m$ matrix-equivalence maps on extended row vectors by $\LEquiv_{\Vector}(\Matlm)$.
\end{definition}

From the previous commentary and the fact that $\lambda I_l \otimes \lambda^{-1}I_m$ is the identity map on $\F_q^{lm}$ for any $\lambda \in \F_q^*$, it is easy to show that the collection of matrix-equivalence maps on extended row vectors has the group structure given in Proposition \ref{LEquivVec} below, and so we omit the proof here; a full proof is available in \cite{MyThesis} for the interested reader.

 \begin{proposition}\label{LEquivVec}
The group $\LEquiv_{\Vector}(\Matlm)$ of linear $l \times m$ matrix-equivalence maps on extended row vectors in $\F_q^{lm}$ satisfies
\[
\LEquiv_{\Vector}\left(\Matlm\right) =
\left\{ \begin{array}{ll} \{T^i(L^\top \otimes M) ~|~i \in \{0,1\},~L,M \in \GL_l(\F_q)\} & \text{if } l=m \\ 

\{L^\top \otimes M ~|~L\in \GL_l(\F_q),~M \in \GL_m(\F_q)\} & \text{if } l \neq m 
\end{array} \right.
\]
where $T=[E_{ji}]_{ij}$ is the matrix for transposition.  The group structure is given by
\[
\LEquiv_{\Vector}(\Matlm) \cong  \left\{ \begin{array}{ll} \Z_2 \ltimes \left(\GL_l(\F_q) \times \GL_l(\F_q)\right)/N & \text{if } l=m \\ 

\left(\GL_l(\F_q) \times \GL_m(\F_q)\right)/N & \text{if } l \neq m 
\end{array} \right.
\]
where $N=\{(\lambda I_l, \lambda^{-1}I_m)~|~\lambda~\in~\F_q^*\} \leq \GL_l(\F_q) \times \GL_m(\F_q)$.  
  \end{proposition}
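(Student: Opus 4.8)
The plan is to deduce everything from the classification of rank-preserving linear maps on $\Matlm$ cited above, together with the three translation rules stated there: left multiplication by $L\in\GL_l(\F_q)$, right multiplication by $M\in\GL_m(\F_q)$, and (only when $l=m$) transposition correspond, on $\rho$-images, to right multiplication by $L^\top\otimes I_m$, $I_l\otimes M$, and $T$ respectively. First I would establish the set-level description. Given $f\in\LEquiv_\Vector(\Matlm)$, write $f=\rho\circ g\circ\rho^{-1}$ for a rank-preserving $g$; by the structure theorem, $g$ is a finite composition of left multiplications, right multiplications, and (for $l=m$) transpositions. Translating each factor and repeatedly applying the Kronecker mixed-product identity $(L^\top\otimes I_m)(I_l\otimes M)=L^\top\otimes M$ collapses any such composition to $L^\top\otimes M$ when $l\neq m$. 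When $l=m$ one uses in addition $T^2=I_{l^2}$ (double transposition is trivial) and the identity $(LA^\top M)^\top=M^\top AL^\top$, which allows a factor $T$ to be moved past the $\GL$-factors at the cost of transposing and interchanging them; this reduces every composition to the normal form $T^{i}(L^\top\otimes M)$, $i\in\{0,1\}$. The reverse inclusions are immediate from the translation rules, giving the claimed equality of sets; and since $\lambda I_l\otimes\lambda^{-1}I_m=I_{lm}$, the parametrization of this set by pairs $(L,M)$ is redundant along $N$.

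For the group structure when $l\neq m$: the composite of the maps $A\mapsto L_1AM_1$ and $A\mapsto L_2AM_2$ is again of the form $A\mapsto LAM$ (with one of the $\GL$-factors multiplied in reversed order), so the assignment $(L,M)\mapsto(\va\mapsto\va(L^\top\otimes M))$ determines a surjection onto $\LEquiv_\Vector(\Matlm)$ that becomes a homomorphism once one $\GL$-factor is identified with its opposite group via transpose (an identification carrying $N$ to itself). It remains to compute the kernel. By the standard fact that, for invertible matrices, $A_1\otimes B_1=A_2\otimes B_2$ forces $A_1=cA_2$ and $B_1=c^{-1}B_2$ for some scalar $c$, one gets $L^\top\otimes M=I_{lm}$ precisely when $L=\mu I_l$ and $M=\mu^{-1}I_m$; hence the kernel is exactly $N$, and $\LEquiv_\Vector(\Matlm)\cong(\GL_l(\F_q)\times\GL_m(\F_q))/N$.

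For $l=m$, let $H$ be the image of that homomorphism, so $H\cong(\GL_l(\F_q)\times\GL_l(\F_q))/N$ sits inside $\LEquiv_\Vector(\Matll)$, and three further observations finish the proof; this last stretch is where the only real care is needed. First, $T^2=I_{l^2}$. Second, $T\notin H$ for $l\ge 2$: an element of $H$ is a map $A\mapsto LAM$, and putting $A=I_l$ forces $M=L^{-1}$, so $T\in H$ would make transposition equal conjugation by some $L$ on $\Matll$; but transposition reverses products while conjugation preserves them, and this fails on any non-commuting pair. Third, $T$ normalizes $H$: from $\rho(A^\top)=\rho(A)T$ one computes $T(L^\top\otimes M)T^{-1}=M\otimes L^\top$ --- equivalently, on parameters, the involution $(L,M)\mapsto(M^\top,L^\top)$ --- which is an automorphism of $H$. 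Combined with the set-level description, these give that $H$ is normal of index two, $\langle T\rangle\cap H=\{I_{l^2}\}$, and $\langle T\rangle H=\LEquiv_\Vector(\Matll)$, so $\LEquiv_\Vector(\Matll)\cong\Z_2\ltimes(\GL_l(\F_q)\times\GL_l(\F_q))/N$ with $\Z_2$ acting by the stated automorphism. The main obstacle, to the extent there is one, is exactly this $l=m$ analysis: getting the conjugation action of $T$ right and excluding $T\in H$; everything else is routine manipulation of Kronecker products. (The degenerate case $l=m=1$, in which $T=I_{l^2}$ and no genuine $\Z_2$ appears, should be handled separately or excluded.)
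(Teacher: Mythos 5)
Your proof is correct and follows the route the paper itself takes: the paper omits the proof of Proposition~\ref{LEquivVec} (deferring to the author's thesis), but the commentary preceding it --- the translation of left multiplication, right multiplication, and transposition into right multiplication by $L^\top\otimes I_m$, $I_l\otimes M$, and $T$, together with the observation that $\lambda I_l\otimes\lambda^{-1}I_m$ is the identity --- is exactly what you flesh out, including the collapse to the normal form $T^i(L^\top\otimes M)$, the kernel computation identifying $N$, and the conjugation relation $T(L^\top\otimes M)T^{-1}=M\otimes L^\top$ giving the semidirect product. Your parenthetical about the degenerate case $l=m=1$, where $T$ is trivial and the stated $\Z_2$ factor disappears, is a fair caveat that the paper does not address.
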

  
  \begin{remark}
By Proposition \ref{LEquivVec}, each linear matrix-equivalence map on extended row vectors has the form $T^i(L^\top \otimes M)$ although there are many maps that have the same action as $T^i(L^\top \otimes M)$ since $T^i(L^\top \otimes M)$ produces the same effect as $T^i(L^\top \otimes M)(\lambda I_l \otimes\lambda^{-1}I_m)$ for any $\lambda \in \F_q^*$.  For simplicity, we will use the notation $[T^i(L^\top \otimes M)]$ to denote the equivalence class of maps corresponding to $T^i(L^\top \otimes M)\cdot \lp (\lambda I_l \otimes\lambda^{-1}I_m)~|~\lambda~\in~\F_q^*\rp$.  
\end{remark}

\section{Method for Enumerating Self-Dual Matrix Codes}
The goal of this paper is to enumerate all the linearly-inequivalent $q$-ary self-dual $l\times m$ matrix codes for small values of $q$, $l$, and $m$.  The current definition of code equivalence is too weak for this purpose, however, because it is possible for self-dual codes to be considered equivalent to codes that are not self-dual.  
Section \ref{equivmapsselfdual} below gives a sufficient condition that may be added to the definition of equivalence to ensure that the equivalence classes of self-dual codes will consist solely of self-dual codes.  Also in that section is a characterization of the matrix-equivalence maps that satisfy this additional condition together with a review of the block-equivalence maps that achieve this condition.  
\subsection{Equivalence Maps that Commute with the Dual}\label{equivmapsselfdual}
Recall from Section \ref{equivalence} 
that we have required equivalence maps to be invertible, linear, and weight-preserving.  Motivated by \cite[p. 185]{SelfDualChapter}, we now will also require that an equivalence map $f$ satisfies $f(C^{\perp})=(f(C))^{\perp}$ for every linear code $C$, since this property of commuting with the dual is precisely what is needed to ensure self-dual codes are mapped to other self-dual codes.   

The subset of linear block-equivalence maps that satisfy this property of commuting with the dual, i.e.\ preserving orthogonality, will be known as \emph{linear block-equivalence maps for self-dual codes} and denoted by $LEquiv_{\Block}^{\SD}(\F_q^n)$.
Recall from Section \ref{equivalence} that the set of general linear block-equivalence maps is $\Mon_n(\F_q)$, the group of monomial matrices, i.e.\ the group of invertible matrices of the form $DP$ for some diagonal matrix $D$ and permutation matrix $P$.  We are interested in the following subgroup of $\Mon_n(\F_q)$:

  \begin{definition}\label{scalarmondef} The \emph{scalar monomial group} is the subgroup $\cM_n(\F_q)$ of $\Mon_n(\F_q)$ consisting of matrices of the form $DP$ where $P$ is a permutation matrix and $D$ is a diagonal matrix whose diagonal entries are $\pm \alpha$ for some $\alpha \in \F_q^*$.
  \end{definition}
  
MacWilliams shows \cite{MacWilliamsThesis} that the matrices in $\cM_n(\F_q)$ are the only monomial matrices that commute with the dual, and so $\LEquiv_{\Block}^{\SD}(\F_q)=\cM_n(\F_q)$. 
In the matrix code setting, we have:
\begin{definition}\label{matrixequivalenceSD}
 An invertible map $f:\Matlm \to \Matlm$ is a \emph{linear matrix-equivalence map for self-dual codes} if $f$ is $\F_q$-linear, preserves rank weight, and has the property that for all linear codes $C \subseteq \Matlm$, $f(C^{\perp_{\Mat}}) = (f(C))^{\perp_{\Mat}}$.  The collection of linear matrix-equivalence maps for self-dual codes is denoted $\LEquiv_{\Mat}^{\SD}(\Matlm)$ and we say that two self-dual matrix codes $C,~\hC \subseteq \Matlm$ are \emph{linearly matrix-equivalent} if there exists a linear matrix-equivalence map for self-dual codes $f$ such that $\hC= f(C)$.  Finally, the collection of linear matrix-equivalence maps on extended row vectors for self-dual codes is denoted $\LEquiv_{\Vector}^{\SD}(\Matlm)$ and it consists of all maps $f$ for which there exists $g \in \LEquiv_{\Mat}^{\SD}(\Matlm)$ such that, for all $A \in \Matlm$, $f(\rho(A))= \rho(g(A))$, i.e.\  $f= \rho \circ g \circ \rho^{-1}$.
  \end{definition}
  
We will use the orthogonal similitudes group to characterize the matrix-equivalence maps on extended row vectors that commute with the dual:

\begin{definition}[\cite{Vinroot}]\label{DefOrthog}
The \emph{orthogonal similitudes group}\footnote{There does not seem to be standard notation for this group, and so we have chosen to follow \cite{Taylor, Vinroot}.  The documentation for Magma refers to this group as the \emph{conformal orthogonal group} denoted by $CO_n(\F_q)$.}, also called the \emph{general orthogonal group} \cite[p. 136]{Taylor}, is the collection of matrices $$\GO_n(\F_q)=\{A \in \GL_n(\F_q)~|~AA^\top  = \lambda I_n \text{ for some }\lambda \in \F_q^*\}$$  For $A \in \GO_n(\F_q)$ with $AA^\top  = \lambda I_n$, we call $\lambda$ the \emph{similitude character} of $A$.  The subgroup $\O _n(\F_q)$ of $\GO_n(\F_q)$ consisting of matrices with similitude character 1 is the \emph{orthogonal group}.
\end{definition}
\begin{remark} 
The orthogonal similitudes group may also be defined more generally on any vector space $V$ with a symmetric bilinear form $\lp \cdot , \cdot \rp$ \cite{Grove, Janusz07, Taylor, Vinroot}.  In this case, the orthogonal similitudes group is the collection of matrices that preserve that bilinear form up to a scalar, again 
called the similitude character.  The orthogonal group is the subgroup consisting of matrices with similitude character 1.  Note that the definition of the orthogonal similitudes group given in Definition \ref{DefOrthog} is the special case of this definition for $V= \F_q^n$ with the standard dot product.  In classical group theory, a different symmetric bilinear form (or in the case of characteristic 2, a quadratic form) is often used when defining the orthogonal group \cite[p. 39, 113]{Grove}.  When $n$ is even there are two equivalence classes of symmetric bilinear forms, which give rise to non-isomorphic groups; we show in the appendix which group is conjugate to the group that preserves the standard dot product that we consider here.
\end{remark}

We now turn to a characterization of the matrix-equivalence maps on extended row vectors that commute with the dual. %


  \begin{proposition}\label{LEquivSDVec}
The group $\LEquiv_{\Vector}^{SD}(\Matlm)$ of linear $l \times m$ matrix-equivalence maps for self-dual codes on extended row vectors satisfies
\[
\LEquiv_{\Vector}^{SD}(\Matlm) \cong  \left\{ \begin{array}{ll} \Z_2 \ltimes \left(\GO_l(\F_q) \times \GO_l(\F_q)\right)/N & \text{if } l=m \\ 

\left(\GO_l(\F_q) \times \GO_m(\F_q)\right)/N & \text{if } l \neq m, 
\end{array} \right.
\]
where $N=\{(\lambda I_l, \lambda^{-1}I_m)~|~\lambda~\in~\F_q^*\} \leq \GO_l(\F_q) \times \GO_m(\F_q)$.  
%
 \end{proposition}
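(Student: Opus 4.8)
The plan is to combine the description of $\LEquiv_{\Vector}(\Matlm)$ from Proposition \ref{LEquivVec} with the characterization of which block-equivalence maps commute with the dual, namely $\LEquiv_{\Block}^{\SD}(\F_q^{lm}) = \cM_{lm}(\F_q)$. Since every $f \in \LEquiv_{\Vector}^{\SD}(\Matlm)$ is in particular a linear matrix-equivalence map on extended row vectors, Proposition \ref{LEquivVec} tells us $f = [T^i(L^\top \otimes M)]$ for some $i \in \{0,1\}$ and $L, M$ of the appropriate sizes (with $i = 0$ forced when $l \neq m$). The task reduces to determining exactly when such a map commutes with the matrix dual, or equivalently — using $C^{\perp_{\Mat}} = \rho^{-1}(\rho(C)^\perp)$ — when $f$ commutes with the standard block dual on $\F_q^{lm}$. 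The key equivalence I would establish first is that $f$ commutes with the block dual on $\F_q^{lm}$ if and only if $f f^\top = \mu I_{lm}$ for some $\mu \in \F_q^*$, i.e.\ $f \in \GO_{lm}(\F_q)$; this is the standard fact that orthogonality-preserving linear maps are precisely the orthogonal similitudes, which follows because $f(C)^\perp = (f^{-\top})(C^\perp)$ for any linear map $f$, so $f(C^\perp) = f(C)^\perp$ for all $C$ forces $f$ and $f^{-\top}$ to differ by a scalar.

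Next I would compute $f f^\top$ for $f = T^i(L^\top \otimes M)$. Using the mixed-product property of the Kronecker product, $(L^\top \otimes M)(L^\top \otimes M)^\top = (L^\top L) \otimes (M M^\top)$, and one checks directly that $T$ is orthogonal ($T T^\top = I$, since $T$ is a permutation matrix) and that conjugation by $T$ swaps the two tensor factors. Therefore $f f^\top = \lambda_L \lambda_M \, I_{lm}$ precisely when $L^\top L = \lambda_L I_l$ and $M M^\top = \lambda_M I_m$ for scalars $\lambda_L, \lambda_M \in \F_q^*$, i.e.\ precisely when $L \in \GO_l(\F_q)$ and $M \in \GO_m(\F_q)$ (noting $L^\top L = \lambda I$ iff $L L^\top = \lambda I$). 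This shows $f \in \LEquiv_{\Vector}^{\SD}(\Matlm)$ iff $L \in \GO_l(\F_q)$ and $M \in \GO_m(\F_q)$, which pins down the set-level description; the group structure then follows exactly as in Proposition \ref{LEquivVec}, by restricting that isomorphism to the subgroup generated by $T$ together with pairs $(L^\top, M) \in \GO_l(\F_q) \times \GO_m(\F_q)$ and quotienting by the same normal subgroup $N$ of scalar pairs (which lies in $\GO_l \times \GO_m$ since scalar matrices are orthogonal similitudes).

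I would also need to verify the converse inclusion cleanly, namely that every $f$ of the stated form genuinely lies in $\LEquiv_{\Vector}^{\SD}(\Matlm)$ — that is, not only commutes with the block dual but also arises from an honest rank-preserving map $g$ on $\Matlm$ (it does, being of the form covered by Proposition \ref{LEquivVec}) and that it commutes with $\perp_{\Mat}$ specifically. This last point is immediate from $C^{\perp_{\Mat}} = \rho^{-1}(\rho(C)^\perp)$ once we know $f = \rho \circ g \circ \rho^{-1}$ commutes with $\perp$ on the image side. The main obstacle, and the one place requiring genuine care, is the reduction step asserting that commuting with the dual is equivalent to membership in $\GO_{lm}(\F_q)$: one must be careful that "$f(C^\perp) = f(C)^\perp$ for \emph{all} $C$" — not just self-dual $C$ — is being used, since it is the quantification over all subspaces that forces the similitude relation rather than merely a weaker condition. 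Everything else is a matter of Kronecker-product bookkeeping and transporting the group isomorphism of Proposition \ref{LEquivVec} to the orthogonal-similitude subgroups.
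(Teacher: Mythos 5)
Your proposal is correct and follows essentially the same route as the paper: reduce to commuting with the standard block dual via $\rho$, show this forces $T^i(L^\top\otimes M)(T^i(L^\top\otimes M))^\top=\lambda I_{lm}$, and then use uniqueness of Kronecker factorization up to scalars to conclude $L\in\GO_l(\F_q)$ and $M\in\GO_m(\F_q)$. The only difference is cosmetic: you justify the intermediate step ``commutes with the dual iff $ff^\top=\mu I_{lm}$'' via the identity $f(C)^\perp=f^{-\top}(C^\perp)$, whereas the paper derives the same conclusion by comparing nullspaces of generator matrices $G_C$ over all codes $C$.
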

 \begin{proof}
 By Proposition \ref{LEquivVec}, we have 
 \[
 \LEquiv_{\Vector}\left(\Matlm\right) =\left\{\begin{array}{ll}  \{T^i(L^\top \otimes M) ~|~i \in \{0,1\},~L,M \in \GL_l(\F_q)\}, & \text{if } l=m\\
 \{L^\top \otimes M ~|~L\in \GL_l(\F_q),~M \in \GL_m(\F_q)\}, & \text{if } l\neq m. \end{array} \right.
 \]
%
Thus, we need only determine which maps in this subgroup commute with the matrix dual.  Note that these maps act on $\rho(C)$ for any $C \subseteq \Matlm$, and so we need to express the matrix dual code $C^{\perp_{\Mat}}$ in terms of $\rho(C)$.  By the commentary following Definition \ref{rhomap},  we see that $\rho(C^{\perp_{\Mat}})=(\rho(C))^\perp$ where $\perp$ denotes the standard block code dual.  Thus, it suffices to determine which maps in $ \LEquiv_{\Vector}\left(\Matlm\right)$ commute with the standard block code dual.  
We will handle $l=m$ and $l \neq m$ simultaneously by writing $ \LEquiv_{\Vector}\left(\Matlm\right)= \{T^i(L^\top \otimes M) ~|~ i\in\{0,1\},~L\in \GL_l(\F_q),~M \in \GL_m(\F_q)\}$ where $i=0$ if $l \neq m$.   

Let $f\in \LEquiv_{\Vector}^{SD}(\Matlm)$.  Then since $\LEquiv_{\Vector}^{SD}(\Matlm) \subseteq \LEquiv_{\Vector}(\Matlm)$, there exists some $L\in \GL_l(\F_q)$ and $M \in \GL_m(\F_q)$ such that $f=T^i(L \otimes M)$, and $f$ satisfies $f((\rho(C))^\perp)=(f(\rho(C)))^\perp$
 for every linear matrix code $C \subseteq \Matlm$. Since $C$ is a linear matrix code, there is a generator matrix $G_C\in \F_q^{k \times lm}$ such that $\rho(C) = \{\va G_C~|~\va \in \F_q^k\}$ where $k=\dim_{\F_q}(C)$.  Then $(\rho(C))^\perp$ has $G_C$ as its parity-check matrix, i.e.\ $(\rho(C))^\perp=\{\vb \in \F_q^{lm}~|~G_C\vb^\top =\vzero\}$, and so 
 \[
f((\rho(C))^\perp) = \{\vb T^i(L^\top \otimes M)~|~G_C\vb^\top =\vzero\}.
\]
Since $G_CT^i(L^\top \otimes M)$ is a generator matrix for $f(\rho(C))$, we have 
 \[
(f(\rho(C)))^\perp = \{\vb ~|~G_CT^i(L^\top \otimes M)\vb^\top =\vzero\}.
\]
Since $f((\rho(C))^\perp)=(f(\rho(C)))^\perp$, each $\vb T^i(L^\top \otimes M) \in f((\rho(C))^\perp)$ with $G_C\vb^\top =\vzero$ must satisfy $G_CT^i(L^\top \otimes M) (\vb T^i(L^\top \otimes M))^\top = \vzero$.  Hence, $G_C$ and $G_CT^i(L^\top \otimes M)(T^{i}(L^\top \otimes M))^\top$ must have the same nullspace for every possible generator matrix $G_C$, and so $T^i(L^\top \otimes M)(T^{i}(L^\top \otimes M))^\top = \lambda I_{lm}$ for some $\lambda \in \F_q^*$ since any column operations other than global multiplication would change the nullspace of at least one generator matrix $G_C$.
Thus,
 \[
\lambda I_{lm}= T^i(L^\top \otimes M)(T^{i}(L^\top \otimes M))^\top = T^i(L^\top \otimes M)(L  \otimes M^\top )T^{i\top}= T^i(L^\top L \otimes MM^\top )T^{i\top}.
 \]
Using the facts that $T$ is symmetric and $T^2=I_{lm}$, we have 
 \[
\left(L^\top L \otimes MM^\top \right)=\lambda (T^{-i})^2=  \lambda I_{lm}.
 \]
Since $\lambda I_{lm} = \lambda (I_l \otimes I_m)$ and decomposition into Kronecker products is unique up to scalars, we have that $L^\top L  = \lambda_1 I_l$ and $MM^\top =\lambda_2I_m$ where $\lambda~=~\lambda_1\lambda_2 ~\neq~0$.  Thus, $L\in \GO_l(\F_q)$ and $M \in \GO_m(\F_q)$, and so 
\[
\LEquiv_{\Vector}^{SD}(\Matlm) = \left\{\begin{array}{ll}  \{T^i(L^\top \otimes M) ~|~i \in \{0,1\},~L,M \in \GO_l(\F_q)\}, & \text{if } l=m\\
 \{L^\top \otimes M ~|~L\in \GO_l(\F_q),~M \in \GO_m(\F_q)\}, & \text{if } l\neq m. \end{array} \right.
\]
Finally, using the group structure of the Kronecker product and the semi-direct product structure for the action of the transpose matrix, we obtain the desired isomorphism.

 \end{proof}
 
  \begin{remark}
From the proof of Proposition \ref{LEquivSDVec}, we see that the linear matrix-equivalence maps for self-dual codes correspond to the maps in $\langle T \rangle \ltimes  \left(\GO_l(\F_q) \otimes \GO_m(\F_q)\right)$, a subgroup of $\GO_{lm}(\F_q)$.  Meanwhile, the linear block-equivalence maps for self-dual codes are the scalar monomial matrices $\cM_{lm}(\F_q)$, also a subgroup of $\GO_{lm}(\F_q)$.  These two groups do not coincide in general.  For example, over odd characteristic, when $l=m=2$, the block matrix 
 $${\footnotesize\left[\begin{array}{rrrr}1&-1&0&0\\1&1&0&0\\0&0&1&-1\\0&0&1&1  \end{array}\right]}= I_l \otimes \left[\begin{array}{rr} 1&-1\\1&1\end{array}\right]$$
  is an element of $\GO_l(\F_q) \otimes \GO_m(\F_q)$ and thus also an element of $\GO_{lm}(\F_q)$; however, this matrix is not an element of $\cM_{lm}(\F_q)$.  A similar trick can be used to show the distinction between these groups for any other value of $l>1$.  Additionally, there are matrices in $\cM_{lm}(\F_q)$ that cannot be written as a Kronecker product, and thus, are not elements of $\GO_l(\F_q) \otimes \GO_m(\F_q)$ or of $\langle T \rangle \ltimes  \left(\GO_l(\F_q) \otimes \GO_m(\F_q)\right)$.  
 \end{remark}

\section{Characterizing Linear Equivalence Classes\\ of Self-Dual Matrix Codes}
We now investigate the linear-equivalence classes of self-dual matrix codes of small parameters.  A similar such analysis of self-dual block codes was an important achievement in classical coding theory.  One practical motivation for this analysis arises from the MacWilliams Identities, which show that the relationship between dimension and minimum distance for a code is reversed for its dual code.  As a result of this reversed relationship, many self-dual codes attain a perfectly balanced trade-off between dimension and minimum distance, making some self-dual codes the best codes known for given parameters \cite{Pless}.  This analysis continued for suboptimal self-dual codes as well, however, because of the nice mathematical properties of self-dual codes, which have produced applications in groups, lattices, and designs \cite{MallowsPlessSloane1976}.  In the sequel, we follow roadmaps from the study of self-dual block codes to enumerate representatives of the equivalence classes of self-dual matrix codes.

Section~\ref{massformulasection} focuses on one key tool that was used to enumerate self-dual block codes, namely the mass formula.  This formula was essential for allowing researchers to determine if all the inequivalent self-dual block codes they had found through various techniques actually gave a complete enumeration of the equivalence classes.  We give an analogous mass formula for self-dual matrix codes that is virtually identical to that from the block code case, taking into account the different equivalence maps for matrix codes.  These formulas align so closely because they are both direct applications of the Orbit-Stabilizer theorem and because there is a one-to-one correspondence between self-dual block codes and self-dual matrix codes as demonstrated in Section~\ref{massformulasection}.  

If one has already generated a number of constructions for self-dual codes and simply needs to confirm that all equivalence classes have been accounted for, then the mass formula is particularly valuable; however, this formula does not actually give a method for producing/finding these self-dual codes.  In Section~\ref{doublecosetsmethod}, we turn to an alternative method for enumerating inequivalent self-dual codes that uses double cosets.  This method will prove more useful here because it will not require the development of multiple new constructions of self-dual codes; we will show that we can construct a single \emph{canonical} self-dual code for each set of parameters and this will suffice to produce a list of all inequivalent self-dual matrix codes.  

\subsection{Mass Formula for Self-Dual Matrix Codes}\label{massformulasection}

As mentioned above, the mass formula for self-dual block codes is used to determine whether a given set of linearly-inequivalent self-dual codes 
is a complete representation of all the equivalence classes of such codes.  This formula comes directly from the Orbit-Stabilizer theorem of abstract algebra, which can be found in any introductory graduate text, e.g.\ \cite[Proposition 2, p.\ 114]{DummitFoote}.  For completeness of the duality theory, we give an analogue of the mass formula for self-dual matrix codes, although we follow a different method, described in Section \ref{doublecosetsmethod}, to actually enumerate the equivalence classes of self-dual matrix codes.  

We begin by observing that self-dual matrix codes are in bijective correspondence with self-dual block codes of appropriate dimensions.

 \begin{lemma}\label{SelfDualBlockMatrix}
 Let $C \subseteq \Matlm$ be a matrix code with corresponding block code $\rho(C) \subseteq \F_q^{lm}$.  Then $C$ is a self-dual matrix code if and only if $\rho(C)$ is a self-dual block code.
 \end{lemma}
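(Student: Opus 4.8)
The plan is to exploit the fact that $\rho\colon \Matlm \to \F_q^{lm}$ is an $\F_q$-linear bijection that intertwines the matrix trace-form with the standard dot product. Concretely, $\rho$ merely reindexes the $lm$ entries of a matrix, so it is visibly injective, surjective, and $\F_q$-linear; in particular $\dim_{\F_q}\rho(C) = \dim_{\F_q} C$ for every linear matrix code $C$. The only substantive input is the identity $\tr(XY^\top) = \rho(X)\cdot\rho(Y)$ recorded after Definition \ref{rhomap}, from which the excerpt already deduces
\[
C^{\perp_{\Mat}} = \rho^{-1}\!\left(\rho(C)^\perp\right), \qquad\text{equivalently}\qquad \rho\!\left(C^{\perp_{\Mat}}\right) = \rho(C)^\perp .
\]

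Given this, I would argue directly. First suppose $C$ is self-dual, i.e.\ $C = C^{\perp_{\Mat}}$. Applying $\rho$ to both sides and using the displayed identity gives $\rho(C) = \rho\!\left(C^{\perp_{\Mat}}\right) = \rho(C)^\perp$, so $\rho(C)$ is self-dual. Conversely, suppose $\rho(C) = \rho(C)^\perp$. Then $\rho(C) = \rho\!\left(C^{\perp_{\Mat}}\right)$ by the displayed identity, and since $\rho$ is injective we may cancel it to conclude $C = C^{\perp_{\Mat}}$; hence $C$ is self-dual. This establishes the claimed equivalence in both directions.

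There is essentially no obstacle here: the lemma is a transparent translation along the bijection $\rho$, and every ingredient has already been assembled in Section \ref{background}. The one point worth a sentence of comment is that ``self-dual'' silently forces the ambient dimension to be even (one needs $\dim_{\F_q} C = lm/2$), but this is automatically consistent on both sides of the equivalence because $\rho$ preserves dimension, so no separate parity discussion is required. If one wishes to make the dimension bookkeeping explicit, it suffices to observe $\dim_{\F_q}\rho(C)^\perp = lm - \dim_{\F_q}\rho(C) = lm - \dim_{\F_q} C = \dim_{\F_q} C^{\perp_{\Mat}}$; but this is subsumed by the cleaner set-theoretic argument above.
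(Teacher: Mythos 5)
Your proof is correct and takes essentially the same route as the paper, which simply cites the identity $\rho\left(C^{\perp_{\Mat}}\right) = \rho(C)^\perp$ established after Definition \ref{rhomap} and declares the lemma an immediate consequence. You have merely spelled out the bijectivity of $\rho$ and the two directions explicitly, which the paper leaves implicit.
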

 \begin{proof}
This is an immediate consequence of the commentary following Definition \ref{rhomap}, which showed that $\rho(C^{\perp_{\Mat}})=(\rho(C))^\perp$.

 \end{proof}

%

 \begin{corollary}\label{NumSelfDualMat}
 The number of $[l \times m, \frac{lm}{2}]_q$ self-dual matrix codes is$$b \prod_{i=1}^{\frac{lm}{2}-1}(q^i+1)$$ where for even $q$, we set $b=1$, and for odd $q$, we set $b=\left\{\begin{array}{ll} 0 & \textrm{if } q \equiv 3 \pmod 4 \textrm{ and } 4\nmid lm\\ 2& \textrm{otherwise} \end{array}\right.$
 \end{corollary}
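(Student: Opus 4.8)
The plan is to reduce the enumeration to the classical count of self-dual \emph{block} codes and then to obtain that count by identifying self-dual codes with maximal totally isotropic subspaces of an explicit form, treating $q$ odd and $q$ even separately. First I would apply Lemma~\ref{SelfDualBlockMatrix}: the concatenation map $\rho$ of Definition~\ref{rhomap} is a bijection from $\Matlm$ onto $\F_q^{lm}$ sending each matrix code to its extended block code, and by that lemma it restricts to a bijection between the self-dual $l\times m$ matrix codes and the self-dual block codes of length $n:=lm$ (both collections being empty unless $n$ is even, which is implicit in writing $[l\times m,\,lm/2]_q$). Hence it suffices to show that the number of self-dual block codes of length $n$ over $\F_q$ is $b\prod_{i=1}^{n/2-1}(q^i+1)$ with $b$ as stated, and then substitute $n=lm$, using that $4\mid lm$ if and only if $n/2$ is even.

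For $q$ odd I would identify self-dual codes with the maximal totally singular subspaces of the quadratic space $(\F_q^n,Q)$, $Q(\vx)=\sum_{i=1}^n x_i^2$: since $\mathrm{char}\,\F_q\neq 2$, the dot product is the polar form of $Q$, so $C\subseteq C^\perp$ exactly when $C$ is totally singular, and $C=C^\perp$ exactly when in addition $\dim_{\F_q}C=n/2$. Such a subspace exists precisely when the Witt index of $(\F_q^n,Q)$ equals $n/2$, i.e.\ when this space is split (hyperbolic); a routine discriminant computation shows this holds iff $(-1)^{n/2}$ is a square in $\F_q$, that is, iff $q\equiv 1\pmod 4$, or $q\equiv 3\pmod 4$ and $4\mid n$ --- and otherwise there are no self-dual codes, giving $b=0$. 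In the split case the number of maximal totally singular subspaces of a $2m$-dimensional split orthogonal space over $\F_q$ equals $\prod_{i=0}^{m-1}(q^i+1)$ (see, e.g., \cite{Taylor, Grove}); taking $m=n/2$ and splitting off the $i=0$ factor gives $2\prod_{i=1}^{n/2-1}(q^i+1)$, so $b=2$.

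For $q$ even I would first record the characteristic-$2$ feature that forces $b=1$: for any self-dual $C$ and any $\vc\in C$ one has $0=\vc\cdot\vc=\sum_i c_i^2=\big(\sum_i c_i\big)^2$, so $\sum_i c_i=0$; hence $C\subseteq H:=\{\vx:\sum_i x_i=0\}$ and moreover $\mathbf 1\in C^\perp=C$. The dot product descends to a nondegenerate bilinear form on $W:=H/\langle\mathbf 1\rangle$, which has dimension $n-2$, and this form is alternating since $\vh\cdot\vh=(\sum_i h_i)^2=0$ for $\vh\in H$; thus $W$ is a symplectic space. Passing to the quotient gives a bijection between the self-dual codes of length $n$ --- each of which contains $\mathbf 1$ and lies in $H$ --- and the Lagrangian subspaces of $W$, of which a $2k$-dimensional symplectic space over $\F_q$ has $\prod_{i=1}^{k}(q^i+1)$ (again \cite{Taylor, Grove}); with $k=(n-2)/2$ this equals $\prod_{i=1}^{n/2-1}(q^i+1)$, and self-dual codes always exist, so $b=1$.

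I expect the even-characteristic case to be the main obstacle, since there the dot product is symmetric but not alternating, so the direct ``maximal totally isotropic subspaces of an orthogonal space'' bookkeeping is unavailable; one must first extract the forced containment $\mathbf 1\in C$, pass to $H/\langle\mathbf 1\rangle$, and check that the quotient form is genuinely symplectic before the count applies. A secondary point requiring care is aligning the odd-characteristic Witt-index computation with the three-way description of $b$ and verifying the elementary identity $\prod_{i=0}^{m-1}(q^i+1)=2\prod_{i=1}^{m-1}(q^i+1)$ used to bring the classical formula into the stated shape. As a shortcut when $q$ is even or $q\equiv 1\pmod 4$, one may instead prove the recursion $N_q(n)=(q^{n/2-1}+1)\,N_q(n-2)$ for the number $N_q(n)$ of self-dual codes of length $n$, with base case $N_q(2)=b$ checked directly.
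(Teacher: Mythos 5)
Your proposal is correct, and its first step coincides exactly with the paper's proof: both invoke Lemma~\ref{SelfDualBlockMatrix} (i.e., $\rho(C^{\perp_{\Mat}})=(\rho(C))^\perp$) to transfer the count from matrix codes to block codes of length $n=lm$. The difference is in what happens next: the paper simply cites the classical enumeration of self-dual block codes from \cite[p.~184]{SelfDualChapter} and stops, whereas you re-derive that enumeration from Witt theory --- identifying self-dual codes for odd $q$ with maximal totally singular subspaces of the quadratic space $(\F_q^n,\sum x_i^2)$ (hyperbolic iff $(-1)^{n/2}$ is a square, which reproduces the three-way definition of $b$, and carrying $\prod_{i=0}^{n/2-1}(q^i+1)=2\prod_{i=1}^{n/2-1}(q^i+1)$ maximal totally singular subspaces in the split case), and for even $q$ extracting the forced containments $C\subseteq\mathbf{1}^\perp$ and $\mathbf{1}\in C$ and passing to the symplectic quotient $\mathbf{1}^\perp/\langle\mathbf{1}\rangle$ of dimension $n-2$, whose Lagrangians number $\prod_{i=1}^{n/2-1}(q^i+1)$. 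I checked the details (nondegeneracy and alternation of the quotient form in characteristic $2$, the discriminant criterion, the small cases $n=2,4$) and they are sound. What your route buys is self-containedness and a conceptual explanation of why $b$ takes the values it does; what the paper's route buys is brevity, since the block-code count is standard and the only new content of the corollary is the bijection supplied by Lemma~\ref{SelfDualBlockMatrix}.
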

 \begin{proof}
 It was shown in \cite[p.\ 184]{SelfDualChapter} that the formula above counts the number of self-dual block codes, and so by Lemma \ref{SelfDualBlockMatrix}, the formula counts the number of self-dual matrix codes as well.
 \end{proof}

Next we observe that the group of equivalence maps on self-dual codes acts on the collection of self-dual codes, and
the \emph{orbits} under this action are precisely the equivalence classes of self-dual codes.  
The \emph{stabilizers} under the action are the \emph{automorphism groups} of the codes restricted to the set of linear matrix-equivalence maps for self-dual codes.  We denote such an automorphism group for a self-dual code $C$ by $\Aut_{\Mat}^{\SD}(C)$.
With this terminology in place, we may now state the mass formula analogue for self-dual matrix codes.

\begin{theorem}[Mass Formula for Matrix Codes]\label{MassMatrix}
 Let $\{C_i\subseteq \Matlm\}_{i=1}^{r}$ be representatives of the distinct equivalence classes of linearly matrix-equivalent self-dual matrix codes of size $l \times m$ over $\F_q$. Then
$$ b \prod_{i=1}^{\frac{lm}{2}-1}(q^i+1)= \sum_{\textrm{linearly matrix-inequivalent } C_i} \frac{|\LEquiv_{\Mat}^{SD}(\Matlm)|}{|\LAut_{\Mat}^{SD}(C_i)|},$$
where for even $q$, we set $b=1$, and for odd $q$, we set $b=\left\{\begin{array}{ll} 0 & \textrm{if } q \equiv 3 \pmod 4 \textrm{ and } 4\nmid lm\\ 2& \textrm{otherwise} \end{array}\right.$
\end{theorem}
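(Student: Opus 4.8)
The plan is to derive the mass formula as a direct application of the Orbit-Stabilizer theorem to the action of $\LEquiv_{\Mat}^{SD}(\Matlm)$ on the set $\mathcal{S}$ of all $[l\times m, lm/2]_q$ self-dual matrix codes. First I would confirm that this group action is well-defined: a map $f \in \LEquiv_{\Mat}^{SD}(\Matlm)$ is by Definition \ref{matrixequivalenceSD} $\F_q$-linear, invertible, and satisfies $f(C^{\perp_{\Mat}}) = (f(C))^{\perp_{\Mat}}$ for every linear code $C$; hence if $C$ is self-dual, then $f(C) = f(C^{\perp_{\Mat}}) = (f(C))^{\perp_{\Mat}}$, so $f(C)$ is again self-dual and lies in $\mathcal{S}$. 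Since $f$ is a bijection on $\Matlm$ and $\dim_{\F_q} f(C) = \dim_{\F_q} C = lm/2$, the image has the right dimension. Composition of such maps again lies in $\LEquiv_{\Mat}^{SD}(\Matlm)$ (it is a group), so this is a genuine group action.

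Next I would identify the orbits and stabilizers. By the discussion immediately preceding the theorem, two self-dual codes lie in the same orbit precisely when they are linearly matrix-equivalent, so the orbits are exactly the equivalence classes, with representatives $C_1,\dots,C_r$. The stabilizer of $C_i$ under the action is, by definition, $\{f \in \LEquiv_{\Mat}^{SD}(\Matlm) : f(C_i) = C_i\} = \LAut_{\Mat}^{SD}(C_i)$. The Orbit-Stabilizer theorem (e.g.\ \cite[Proposition 2, p.\ 114]{DummitFoote}) then gives that the orbit of $C_i$ has size $|\LEquiv_{\Mat}^{SD}(\Matlm)| / |\LAut_{\Mat}^{SD}(C_i)|$. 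Since $\mathcal{S}$ is the disjoint union of the orbits, summing over the $r$ equivalence-class representatives yields
\[
|\mathcal{S}| = \sum_{\textrm{linearly matrix-inequivalent } C_i} \frac{|\LEquiv_{\Mat}^{SD}(\Matlm)|}{|\LAut_{\Mat}^{SD}(C_i)|}.
\]

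Finally I would substitute the count of $|\mathcal{S}|$. By Corollary \ref{NumSelfDualMat}, the number of $[l\times m, \frac{lm}{2}]_q$ self-dual matrix codes is exactly $b \prod_{i=1}^{lm/2 - 1}(q^i + 1)$ with $b$ as specified in the statement. Substituting this for $|\mathcal{S}|$ on the left-hand side gives the claimed identity. One subtlety worth a sentence: when $b = 0$ (the case $q \equiv 3 \pmod 4$ and $4 \nmid lm$), $\mathcal{S}$ is empty, $r = 0$, and both sides are $0$, so the formula holds vacuously; otherwise $\mathcal{S}$ is nonempty and the argument above applies verbatim.

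I do not anticipate a genuine obstacle here — the proof is essentially bookkeeping once the group action is set up. The only point requiring mild care is verifying that the action is well-defined and that the stabilizer really coincides with $\LAut_{\Mat}^{SD}(C_i)$ as defined (i.e.\ automorphisms restricted to self-dual equivalence maps, not all rank-isometries); both follow immediately from Definition \ref{matrixequivalenceSD} and the paragraph preceding the theorem. The heavy lifting — computing $|\LEquiv_{\Mat}^{SD}(\Matlm)|$ via Proposition \ref{LEquivSDVec} and counting $|\mathcal{S}|$ via Lemma \ref{SelfDualBlockMatrix} and Corollary \ref{NumSelfDualMat} — has already been done in the preceding results, so this theorem is purely a consequence of assembling those pieces with Orbit-Stabilizer.
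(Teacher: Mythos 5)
Your proposal is correct and follows exactly the paper's argument: apply the Orbit-Stabilizer theorem to the action of $\LEquiv_{\Mat}^{SD}(\Matlm)$ on the set of self-dual matrix codes, identify the orbits with equivalence classes and the stabilizers with $\LAut_{\Mat}^{SD}(C_i)$, and substitute the count from Corollary \ref{NumSelfDualMat}. Your version simply spells out the well-definedness of the action and the vacuous case $b=0$, details the paper leaves implicit.
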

\begin{proof}
The space of self-dual matrix codes is partitioned into equivalence classes, or orbits, by the action of the linear matrix equivalence maps.   The number of codes is the sum of the sizes of the orbits.  By the Orbit-Stabilizer theorem, see e.g.\ \cite[Proposition 2, p.\ 114]{DummitFoote}, the size of the orbit of a code $C$ equals $\displaystyle \frac{|\LEquiv_{\Mat}^{SD}(\Matlm)|}{|\LAut_{\Mat}^{SD}(C)|}$.  The result follows. 
\end{proof}

As noted above, the mass formula plays a crucial role in the enumeration of linearly inequivalent self-dual block codes.  
The mass formula is only useful, however, if one has developed a number of different constructions of self-dual codes.  While we could take advantage of known constructions from the block code literature, we take a slightly different approach here that enables us to exploit a single construction for a \emph{canonical} self-dual matrix code for each set of parameters. This alternative approach is described in the next section.

\subsection{Double-Coset Characterization of Inequivalent Self-Dual Matrix Codes}\label{doublecosetsmethod}
In this section, we extend the work of Janusz \cite{Janusz07} to recast the problem of enumerating linearly inequivalent matrix codes as a problem of enumerating double-coset representatives within a group that acts transitively on the collection of self-dual codes.  While the problem of enumerating double-coset representatives is also intractable for large groups, it has proven feasible in the small cases we will examine here whereas a brute force search is not possible for many of these cases.  This framework also provides a nice bridge for relating the equivalence classes of self-dual block codes with those of self-dual matrix codes as well as relating the automorphism groups of the codes in each of these contexts.    Toward this end, we introduce a number of theorems and terminology given in Janusz's paper for the case of self-dual block codes over the field $\F_2$, and we extend these ideas to work for both block and matrix codes over arbitrary finite fields. 

Janusz shows \cite{Janusz07} that $\O_n(\F_2)$ acts transitively on the collection of self-dual block codes over $\F_2$.  In Theorem \ref{GOTransitive}, we prove that $\GO_n(\F_q)$ acts transitively on the collection of self-dual block and matrix codes over arbitrary fields.  We include the proof here because of this modification, but also because the following proof is dramatically simpler than the proof given in \cite{Janusz07}.  

\begin{theorem}[Extended from \cite{Janusz07}]\label{GOTransitive}
The group $\GO_n(\F_q)$ of orthogonal matrices acts transitively on the set of self-dual codes of length $n$ over $\F_q$.  That is:
\begin{enumerate}
\item If $C\subseteq \F_q^n$ is self-dual and $A \in \GO_n(\F_q)$, then $C A$ is self-dual, and
\item If $C,~\widehat{C} \subseteq \F_q^n$ are self-dual, then there is some $A \in \GO_n(\F_q)$ such that\\
 $C A=\widehat{C}$.
\end{enumerate}
\end{theorem}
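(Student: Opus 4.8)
The plan is to prove both parts by elementary manipulations of generator matrices, avoiding any Witt-type extension theorem for symmetric bilinear forms (which is, I expect, why the argument can be made so much shorter than the one in \cite{Janusz07}); in fact the same computation will show the stronger statement that the orthogonal group $\O_n(\F_q)$ already acts transitively. Part (1) is immediate: if $A \in \GO_n(\F_q)$ has similitude character $\lambda$, then for all $\vx, \vy \in C$ we have $(\vx A)(\vy A)^\top = \vx(AA^\top)\vy^\top = \lambda\,\vx\vy^\top = 0$, so $CA$ is self-orthogonal, and since $A$ is invertible $\dim(CA) = \dim C = n/2$, forcing $CA = (CA)^\perp$. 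Along the way I would record the elementary facts, needed below, that $AA^\top = \lambda I_n$ with $A$ square implies $A^\top A = \lambda I_n$, and that $\GO_n(\F_q)$ is closed under products and inverses, hence a group.

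For part (2) the first step is a normal form. Any code of dimension $k := n/2$ has an information set, so after permuting coordinates by a permutation matrix $P_C \in \O_n(\F_q)$ we may assume $C P_C$ has a generator matrix of the form $[\,I_k \mid B\,]$ with $B \in \F_q^{k \times k}$; self-duality forces $[\,I_k \mid B\,][\,I_k \mid B\,]^\top = I_k + BB^\top = 0$, so $BB^\top = -I_k$, whence $B$ is invertible with $B^{-1} = -B^\top$ and therefore $B^\top B = -I_k$ as well. Applying the same reduction to $\widehat{C}$ yields $P_{\widehat{C}} \in \O_n(\F_q)$ and an invertible $\widehat{B}$ with $\widehat{B}\widehat{B}^\top = -I_k$ such that $\widehat{C} P_{\widehat{C}}$ has generator matrix $[\,I_k \mid \widehat{B}\,]$. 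I then propose the explicit matrix
\[
A_0 = \begin{pmatrix} I_k & 0 \\ 0 & B^{-1}\widehat{B} \end{pmatrix}, \qquad A := P_C\, A_0\, P_{\widehat{C}}^{-1}.
\]
Two checks finish the proof. First, $[\,I_k \mid B\,]A_0 = [\,I_k \mid B B^{-1}\widehat{B}\,] = [\,I_k \mid \widehat{B}\,]$, so $C P_C A_0$ and $\widehat{C} P_{\widehat{C}}$ have the same row space and hence $CA = \widehat{C}$. Second, a block multiplication, using $\widehat{B}\widehat{B}^\top = -I_k$ and then $B^\top B = -I_k$, gives
\[
A_0 A_0^\top = \begin{pmatrix} I_k & 0 \\ 0 & B^{-1}\widehat{B}\widehat{B}^\top (B^{-1})^\top \end{pmatrix} = \begin{pmatrix} I_k & 0 \\ 0 & -(B^\top B)^{-1} \end{pmatrix} = I_n,
\]
so $A_0 \in \O_n(\F_q)$; since permutation matrices lie in $\O_n(\F_q)$ and $\O_n(\F_q)$ is a group, $A \in \O_n(\F_q) \subseteq \GO_n(\F_q)$.

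Rather than a genuine obstacle, the point requiring care is keeping everything uniform in the characteristic so that even $q$ needs no separate treatment: the implications $BB^\top = -I_k \Rightarrow B^\top B = -I_k$ and $(-I_k)^{-1} = -I_k$ hold over every field, which is exactly what makes the last display collapse to $I_n$; a naive attempt that tries to diagonalize the form first would run into the usual characteristic-$2$ complications, which this generator-matrix argument sidesteps entirely. I would also note that the reduction presupposes $n$ even (automatic, since a self-dual code has dimension $n/2$) and that when no self-dual code of length $n$ exists — the $q \equiv 3 \pmod 4$, $4 \nmid n$ case of Corollary \ref{NumSelfDualMat} — both assertions are vacuous. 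Finally, because self-dual matrix codes in $\Matlm$ correspond bijectively under $\rho$ to self-dual block codes of length $lm$ by Lemma \ref{SelfDualBlockMatrix}, the matrix-code version of the statement is an immediate consequence of the block-code case with $n = lm$.
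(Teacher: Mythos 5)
Your proof is correct and takes essentially the same route as the paper's: part (1) via the similitude relation $AA^\top=\lambda I_n$, and part (2) by bringing both codes to systematic form $[\,I_k\mid B\,]$ up to a coordinate permutation and conjugating a block-diagonal matrix by those permutations --- and since $B^{-1}=-B^\top$, your middle block $B^{-1}\widehat{B}$ is literally the paper's $-M^\top\widehat{M}$. The only difference is that you spell out the check that this block is orthogonal (using $B^\top B=-I_k$), a step the paper leaves implicit.
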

\begin{proof}
Let $G$ be a generator matrix for the self-dual code $C$.  Then $GG^\top = {\bf 0}_k$, where $k=\frac{n}{2}$ is the dimension of the code.  Let $A \in  \GO_n(\F_q)$, i.e.\ $AA^\top =\lambda I_n$ for some $\lambda \in \F_q^*$.  Then $GA$ is a generator matrix for $CA$.  Since $GA(GA)^\top= GAA^\top G^\top = G\lambda I_nG^\top=\lambda GG^\top=\lambda {\bf 0}_k= {\bf 0}_k$, we see that $CA$ is self-dual as well, and so (1) holds.

Let $G, \hat{G}$ be generator matrices for $C$ and $\hat{C}$ respectively that are in reduced row echelon form. Since up to column permutation, every code has a systematic generator matrix, we see that $G$ and $\hat{G}$ have the form $[I_k | M] P$ and $[I_k | \hat{M}] \hat{P}$, for some matrices $M, \hat{M} \in \F_q^{k \times k}$ and some permutation matrices $P, \hat{P} \in \O_n(\F_q)$, where $k=\frac{n}{2}$.  Since $C$ and $\hat{C}$ are self-dual, $M M^T= \hat{M}\hat{M}^\top = -I_k$.  Then $GA = \hat{G}$ for 
\[
A = P^{-1} \begin{bmatrix} I_k & {\bf 0}_k \\ {\bf 0}_k & -M^\top \hat{M}\end{bmatrix} \hat{P}.
\]
Since the permutation matrices satisfy $PP^\top=I_n$, we have that $AA^\top=I_n$.  Thus, $A \in \O_n(\F_q)$ and $\hat{C}=CA$, so (2) holds.  
\end{proof}

To make use of this transitive action in our enumeration of self-dual codes, we must begin with some specified canonical self-dual code on which we will act.  
To develop such a canonical self-dual code when $q \equiv 3 \pmod 4$, we will need a straight-forward number theoretic lemma:
\begin{lemma}\cite[ Lemma 11.1, p.\ 138]{Taylor}\label{numbertheory}
If $q=p^e$ is a power of an odd prime, then for any $\alpha \in \F_q$, there exist elements $a, b \in \F_q$ such that $a^2+b^2=\alpha$. 
\end{lemma}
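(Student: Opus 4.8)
The plan is to prove this by a pigeonhole (counting) argument on the set of squares in $\F_q$. The first step is to pin down the number of squares: since $q$ is odd, the squaring homomorphism $x \mapsto x^2$ on $\F_q^*$ has kernel $\{\pm 1\}$, which has order $2$, so its image has size $(q-1)/2$; adjoining $0$ shows that $\F_q$ contains exactly $(q+1)/2$ squares in all.

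Next, fix $\alpha \in \F_q$ and introduce the two subsets
\[
S_1 = \{\, a^2 \mid a \in \F_q \,\}, \qquad S_2 = \{\, \alpha - b^2 \mid b \in \F_q \,\}.
\]
By the count just established, $|S_1| = (q+1)/2$; and since $t \mapsto \alpha - t$ is a bijection of $\F_q$, the set $S_2$ is the image of the set of squares under this bijection, so $|S_2| = (q+1)/2$ as well. Then $|S_1| + |S_2| = q+1 > q = |\F_q|$, so $S_1$ and $S_2$ cannot be disjoint. Choosing any $c \in S_1 \cap S_2$ produces $a, b \in \F_q$ with $a^2 = c$ and $\alpha - b^2 = c$, and hence $a^2 + b^2 = \alpha$, which is exactly what is claimed.

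I do not anticipate a genuine obstacle here: the argument is elementary and self-contained once the count of squares is available, and it does not use any of the earlier machinery of the paper. The only point that deserves a sentence of care is the assertion that $S_2$ has the same cardinality as the set of squares — this is immediate because translation by $\alpha$ is injective — and one should also note that the argument is written uniformly for all $\alpha$ (including $\alpha = 0$, where $a = b = 0$ works trivially, consistent with the count). Everything else is a direct pigeonhole estimate.
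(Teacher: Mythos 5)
Your argument is correct and complete: the count of $(q+1)/2$ squares, the translation $t \mapsto \alpha - t$, and the pigeonhole intersection together give exactly what is claimed, including the case $\alpha = 0$. The paper itself supplies no proof — it cites the result from Taylor — and your counting argument is the standard one given in that reference, so there is nothing to reconcile.
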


In particular, Lemma \ref{numbertheory} guarantees the existence of elements $a, b \in \F_q$ such that $a^2+b^2=-1$ when $q$ is a power of an odd prime.  It is clear that for a self-dual code to exist in $\F_q^n$, $n$ must be even.  Furthermore, Pless shows in \cite{Pless1968} that for a self-dual code to exist in $\F_q^n$ when $q\equiv3 \pmod 4$, $n$ must be a multiple of 4.  Abiding by these constraints on $n$, we define a \emph{canonical} self-dual code $C_{\Canon}$ in $\F_q^n$ for every value of $n$ that admits a self-dual code.
\begin{definition}\label{Canon}
Let $n$ be even and $k=\frac{n}{2}$.  The  \emph{canonical self-dual code $C_{\Canon}$ in $\F_q^n$} is the code generated by one of the following generator matrices as dictated by the value of $q$ mod 4:
\[
\begin{array}{ll}
\begin{bmatrix} I_k ~\vert~ I_k\end{bmatrix}& \textrm{ if } 2~\vert~q;\\
&\\
\begin{bmatrix} I_k ~\vert~ aI_k\end{bmatrix} &\textrm{ if }q \equiv 1 \pmod 4; \\
&\\
\footnotesize{\left[ \begin{array}{c|c|c|c}I_{\frac{k}{2}} & {\bf 0} & bI_{\frac{k}{2}} & cI_{\frac{k}{2}} \\ {\bf 0} & I_{\frac{k}{2}}& cI_{\frac{k}{2}}& -bI_{\frac{k}{2}} \end{array} \right]}& \textrm{ if }q \equiv 3 \pmod 4,\ \ 4~\vert~n,
\end{array}
\]
where $a \in \F_q$ satisfies $a^2=-1$ when $q \equiv 1 \pmod 4$, and $b,~c \in \F_q$ satisfy $b^2~+~c^2~=~-1$ when $q \equiv 3 \pmod 4$.
\end{definition}
It is easy to check that for each value of $q$ and $n$, the rows of the generator matrices given in Definition \ref{Canon} are pairwise orthogonal and linearly independent.  Note that different values of $a,$ $b$, and $c$ satisfying the prescribed conditions will give distinct, possibly inequivalent, canonical self-dual codes; however, the enumeration method outlined below is independent of these values, and so we may choose any values for these parameters to successfully carry out the enumeration.
We now characterize the set of elements in $\GL_n(\F_q)$ that fix the canonical self-dual code.  We use this characterization to determine which elements of $\GO_n(\F_q)$ fix that code.  This is a key step to determining all the linearly inequivalent self-dual codes.

\begin{definition}\label{LGOStabDef}
Let $C \subseteq \F_q^n$ be a self-dual code.  The \emph{linear stabilizer} $\LStab(C)$ of $C$ is given by
\[
\LStab(C) = \{M \in \GL_n(\F_q) ~|~ C M = C\}.
\]

\noindent The \emph{orthogonal linear stabilizer} $\LGOStab(C)$ of $C$ is given by 
\[
\LGOStab(C) = \LStab(C) \cap \GO_n(\F_q).
\]
\end{definition}
\begin{remark}
For $C=\rho(\cC)$ for some $\cC \in \Matlm$, neither $\LStab(C)$ nor $\LGOStab(C)$ coincides with the traditional automorphism group (one extended row vectors) of the code $\cC$ since that group is restricted to only contain \emph{equivalence maps} that fix the code, i.e.\ elements of the form $\langle T\rangle \rtimes \GO_l(\F_q) \otimes \GO_m(\F_q)$ that fix $\cC$.  However, both $\LStab(C)$ and $\LGOStab(C)$ contain the traditional automorphism group on extended row vectors of the code $\cC$.
\end{remark}
Proposition \ref{Stab} below precisely characterizes the elements of $\LStab(C_{\Canon})$ for the canonical self-dual code $C_{\Canon}$.  This is an extension of Theorem 11 in  \cite{Janusz07}, which characterizes the matrices in $\GL_n(\F_2)$ that stabilize a given binary self-dual code.  The proof is similar to that of \cite{Janusz07}, and so we omit it here.
\begin{proposition}[Extended from \cite{Janusz07}]\label{Stab}
Let $C_{\Canon} \subseteq \F_q^n$ be the canonical self-dual code of dimension $k=\frac{n}{2}$ with generator matrix $[I_k~|~M]$ for appropriate $M$.   Let $S=\begin{bmatrix} I_k & M \\  I_k & {\bf 0}_k \end{bmatrix}$.
Then the linear stabilizer of $C_{\Canon}$ is given by 
\[
\LStab(C_{\Canon}) = \left\{\begin{array}{l|l} S^{-1}{\scriptsize\begin{bmatrix} A&0\\ B&C\end{bmatrix}}S& A,C \in \GL_k(\F_q) \text{ and } B \in \F_q^{k \times k} \end{array}\right\}.
\]
\end{proposition}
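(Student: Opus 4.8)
The plan is to show set equality between $\LStab(C_{\Canon})$ and the claimed set of conjugates $S^{-1}\left[\begin{smallmatrix} A & 0 \\ B & C \end{smallmatrix}\right]S$. The key observation is that $S$ is precisely the matrix that converts the generator-matrix form into a more convenient coordinate system: the rows of $[I_k \mid M]$ span $C_{\Canon}$, while the rows of $[I_k \mid \mathbf{0}_k]$ span a complementary coordinate subspace $C_0$, and together these $n$ rows form the invertible matrix $S$. Hence $S$ sends the standard basis of $\F_q^n$ to a basis whose first $k$ vectors span $C_{\Canon}$ and whose last $k$ vectors span $C_0$. In this new basis, a linear map $N \in \GL_n(\F_q)$ fixes $C_{\Canon}$ (as a set, acting on the right) if and only if $SNS^{-1}$ preserves the subspace spanned by the first $k$ coordinate vectors, which is exactly the condition that $SNS^{-1}$ is block lower-triangular of the form $\left[\begin{smallmatrix} A & 0 \\ B & C \end{smallmatrix}\right]$ with $A, C \in \GL_k(\F_q)$ and $B$ arbitrary in $\F_q^{k \times k}$.

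First I would verify that $S \in \GL_n(\F_q)$: since $S = \left[\begin{smallmatrix} I_k & M \\ I_k & \mathbf{0}_k \end{smallmatrix}\right]$, row-reducing gives $\det S = \pm\det(-M) = \pm\det M \neq 0$ because $MM^\top = -I_k$ forces $M$ to be invertible (this also covers the $q\equiv 3 \pmod 4$ case where $M = \left[\begin{smallmatrix} bI & cI \\ cI & -bI\end{smallmatrix}\right]$, whose determinant is $(-(b^2+c^2))^{k/2} = 1 \neq 0$). Next, I would identify the rowspace decomposition: $\rowspan\left[\begin{smallmatrix} I_k & M \\ \mathbf{0}_k & \mathbf{0}_k \end{smallmatrix}\right] = C_{\Canon}$, and note that right-multiplication by $S$ on row vectors corresponds to the change of basis sending $C_{\Canon}$ to the span $V_1$ of the first $k$ standard basis vectors. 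Concretely, $CM = C$ for $C = C_{\Canon}$ is equivalent to $(CS^{-1})(SMS^{-1}) = CS^{-1}$, i.e. $V_1 \cdot (SMS^{-1}) = V_1$; I would then confirm $C_{\Canon}S^{-1} = V_1$ using $[I_k \mid M] S^{-1} = [I_k \mid \mathbf{0}_k]$ (which is what $S^{-1}$ is built to do).

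The remaining step is the standard linear-algebra fact that the right-stabilizer of $V_1 = \rowspan[I_k \mid \mathbf{0}_k]$ in $\GL_n(\F_q)$ consists exactly of the block lower-triangular matrices $\left[\begin{smallmatrix} A & 0 \\ B & C \end{smallmatrix}\right]$ with $A, C$ invertible: writing a row vector $(\vx, \vy)$, the image under such a matrix is $(\vx A + \vy B, \vy C)$, so $\vy = \vzero$ is preserved; conversely, if the block in the $(1,2)$ position of some $N' \in \GL_n$ is nonzero, then some row of $V_1$ (which has $\vy = \vzero$) maps to a vector with nonzero $\vy$-part, leaving $V_1$. Putting these together, $N \in \LStab(C_{\Canon})$ iff $SNS^{-1}$ stabilizes $V_1$ iff $SNS^{-1} = \left[\begin{smallmatrix} A & 0 \\ B & C \end{smallmatrix}\right]$ with $A, C \in \GL_k(\F_q)$, $B \in \F_q^{k\times k}$, which is the claimed description after solving $N = S^{-1}\left[\begin{smallmatrix} A & 0 \\ B & C \end{smallmatrix}\right]S$.

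The main obstacle I anticipate is purely bookkeeping: getting the left/right conventions straight. The paper has codes acting by right multiplication ($CM = C$), so the conjugation must be set up as $S \cdot S^{-1}$ acting appropriately on row versus column vectors, and one must be careful that $S^{-1}[I_k \mid M] = [\ast]$ versus $[I_k \mid M]S^{-1}$; a sign or transpose slip here would invert the triangular structure (upper vs. lower). I would resolve this by explicitly computing $S^{-1}$ — it has the block form $\left[\begin{smallmatrix} \mathbf{0}_k & I_k \\ M^{-1} & -M^{-1} \end{smallmatrix}\right]$ — and checking directly that $[I_k \mid M]\,S^{-1} = [I_k \mid \mathbf{0}_k]$ and $[I_k \mid \mathbf{0}_k]\,S^{-1} = [\mathbf{0}_k \mid I_k]$, confirming that right-multiplication by $S^{-1}$ carries the basis $\{C_{\Canon}\text{-rows}, C_0\text{-rows}\}$ to the standard basis in the correct order. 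Everything else is routine.
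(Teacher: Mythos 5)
Your proposal is correct and is essentially the intended argument: the paper itself omits the proof of Proposition~\ref{Stab} (deferring to Theorem~11 of \cite{Janusz07}), and your change-of-basis computation --- observing that the rows of $S$ are a basis of $\F_q^n$ whose first $k$ rows span $C_{\Canon}$, so that $N$ stabilizes $C_{\Canon}$ under right multiplication iff $SNS^{-1}$ stabilizes $\rowspan[I_k \mid \mathbf{0}_k]$, whose stabilizer is the block lower-triangular group --- is exactly that standard argument. You also correctly supply the one nontrivial verification, namely that $MM^\top = -I_k$ (from self-duality) forces $M$, and hence $S$, to be invertible, so there is no gap.
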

We can now precisely enumerate which elements of $\GO_n(\F_q)$ will map $C_{\Canon}$ to linearly block-inequivalent or linearly matrix-inequivalent self-dual codes.  
Theorem \ref{doublecosetcharacterization} shows that the linear maps corresponding to linearly inequivalent codes are the representatives of distinct double cosets within the group $\GO_n(\F_q)$.  
This result is analogous to that found over $\F_2$ in \cite{Janusz07}.  Before proving this theorem, we briefly review the group theory topic of double cosets.

\begin{definition}\cite[p.\ 117]{DummitFoote}
Let $H,K \leq G$ and let $g \in G$.  The \emph{$H$-$K$ double coset of $g$} is given by
\[
HgK= \{ hgk~|~ h \in H,~ k \in K\}.
\]
\end{definition}
\begin{remark} 
As with left and right cosets, the collection of $H$-$K$ double cosets partition the group $G$.  To see this, we recast the $H$-$K$ double coset of $g$ as either a union over $h \in H$ of left cosets of the form $hgK$ or alternatively, as a union over $k \in K$ of right cosets of the form $Hgk$.  One important difference between double cosets and left or right cosets is that double cosets need not all have the same size, i.e. $|HgK|$ need not equal $|H\hat{g}K|$ for $g, \hat{g} \in G$.
\end{remark}

Recall that $\GO_n(\F_q)$ acts transitively on the collection of self-dual codes, and so every self-dual code can be written in the form $C= C_{\Canon} A$ for some $A \in \GO_n(\F_q)$.  Theorem \ref{doublecosetcharacterization} below demonstrates that self-dual codes $C_{\Canon} A$ and $C_{\Canon} B$ are linearly inequivalent precisely when $A$ and $B$ lie in distinct double cosets dictated by the appropriate group of equivalence maps.
\begin{theorem}[Extended from \cite{Janusz07}]\label{doublecosetcharacterization}
Let $n=lm$ for some positive integers $l$ and $m$, and let $\rho:\Matlm \to \F_q^n$ be as in Definition \ref{rhomap}.   Let $\LGOStab(C_{\Canon})$ be the orthogonal linear stabilizer as in Definition \ref{LGOStabDef}; let $\cM_n(\F_q)$ be the scalar monomial matrices as in Definition \ref{scalarmondef}; and let $\LEquiv_{\Vector}^{\SD}(\Matlm)$ be the linear matrix-equivalence maps for self-dual codes as in Definition \ref{matrixequivalenceSD}.
Then for any self-dual codes $C= C_{\Canon}A \subseteq \F_q^n$ and $\widehat{C}=C_{\Canon}B\subseteq \F_q^n$ with $A,B \in \GO_n(\F_q)$, we have 
\begin{enumerate}
\item $C$ is linearly block-equivalent to $\widehat{C}$ if and only if $A$ and $B$ are in the same $\LGOStab(C_{\Canon})$-$\cM_n(\F_q)$ double coset of $\GO_n(\F_q)$. 
\item $\rho^{-1}(C)$ is linearly matrix-equivalent to $\rho^{-1}(\widehat{C})$ if and only if $A$ and $B$ are in the same $\LGOStab(C_{\Canon})$-$\LEquiv_{\Vector}^{\SD}(\Matlm)$ double coset of $\GO_n(\F_q)$.  
\end{enumerate}
\end{theorem}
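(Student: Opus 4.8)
The plan is to derive both parts from a single orbit-and-double-coset principle, since they differ only in which group of equivalence maps is used. Write $G=\GO_n(\F_q)$, which by Theorem~\ref{GOTransitive} acts on the right, transitively, on the set of self-dual codes of length $n=lm$. Let $H=\LGOStab(C_{\Canon})$ be the $G$-stabilizer of the base code $C_{\Canon}$, and let $K\leq G$ denote the group of self-dual equivalence maps in the relevant sense: $K=\cM_n(\F_q)$ for part~(1) and $K=\LEquiv_{\Vector}^{\SD}(\Matlm)$ for part~(2). Two preliminary facts must be recorded. First, $K$ really is a subgroup of $G$: for a scalar monomial matrix $DP$ with $D$ having diagonal entries $\pm\alpha$ one computes $(DP)(DP)^\top=DP\,P^\top D^\top=DD^\top=\alpha^2 I_n$, so $\cM_n(\F_q)\leq\GO_n(\F_q)$; and by the remark following Proposition~\ref{LEquivSDVec}, $\LEquiv_{\Vector}^{\SD}(\Matlm)$ is realized as the subgroup $\langle T\rangle\ltimes\bigl(\GO_l(\F_q)\otimes\GO_m(\F_q)\bigr)$ of $\GO_{lm}(\F_q)=G$. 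Second, in the self-dual setting the notion of equivalence is precisely that of lying in the same $K$-orbit: $C$ is linearly block-equivalent to $\widehat C$ iff $\widehat C=Ck$ for some $k\in\cM_n(\F_q)$, since $\LEquiv_{\Block}^{\SD}(\F_q^n)=\cM_n(\F_q)$ by MacWilliams's theorem; and $\rho^{-1}(C)$ is linearly matrix-equivalent to $\rho^{-1}(\widehat C)$ iff there is $g\in\LEquiv_{\Mat}^{\SD}(\Matlm)$ with $g(\rho^{-1}(C))=\rho^{-1}(\widehat C)$, which, via the translation $f=\rho\circ g\circ\rho^{-1}$ of Definition~\ref{matrixequivalenceSD} and the description of the action of $f$ on extended row vectors as right multiplication by $T^i(L^\top\otimes M)$, says exactly that $\widehat C=Cf$ for some $f\in\LEquiv_{\Vector}^{\SD}(\Matlm)$.

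With this dictionary fixed, the theorem reduces to the purely formal statement: for $A,B\in G$, the self-dual codes $C_{\Canon}A$ and $C_{\Canon}B$ lie in the same $K$-orbit if and only if $A$ and $B$ lie in the same $H$-$K$ double coset of $G$. I would prove this directly. If $C_{\Canon}Ak=C_{\Canon}B$ for some $k\in K$, then $C_{\Canon}(AkB^{-1})=C_{\Canon}$; since $A,k,B\in G=\GO_n(\F_q)$, the product $AkB^{-1}$ lies in $\GO_n(\F_q)$, hence in $\LStab(C_{\Canon})\cap\GO_n(\F_q)=\LGOStab(C_{\Canon})=H$, so $A\in HBk^{-1}\subseteq HBK$ and $A,B$ are in the same double coset. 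Conversely, if $A=hBk$ with $h\in H$ and $k\in K$, then $C_{\Canon}A=C_{\Canon}hBk=(C_{\Canon}h)Bk=C_{\Canon}Bk=(C_{\Canon}B)k$, so $C_{\Canon}A$ is in the $K$-orbit of $C_{\Canon}B$. Taking $K=\cM_n(\F_q)$ gives part~(1) and $K=\LEquiv_{\Vector}^{\SD}(\Matlm)$ gives part~(2).

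Where the real content lies: essentially all of it has already been packaged into the ingredients above — transitivity of $\GO_n(\F_q)$ on self-dual codes (Theorem~\ref{GOTransitive}), the identification of the self-dual equivalence maps with explicit subgroups of $\GO_n(\F_q)$ (MacWilliams for block codes, Proposition~\ref{LEquivSDVec} and its remark for matrix codes), and the explicit form of $\LGOStab(C_{\Canon})$ coming from Proposition~\ref{Stab}, which is what makes the double-coset enumeration computationally feasible even though it is not logically needed for the equivalence itself. So I do not expect a genuinely hard step; the care goes into bookkeeping. The three points to watch are: consistently using the right-action convention $C(AB)=(CA)B$ on codes; checking that $AkB^{-1}$ lands in $\GO_n(\F_q)$, not merely in $\GL_n(\F_q)$, so that it lies in $\LGOStab(C_{\Canon})$ rather than in the larger linear stabilizer $\LStab(C_{\Canon})$; and, the most delicate point, insisting throughout that for self-dual codes ``linearly equivalent'' means equivalent via a map that commutes with the dual — i.e.\ via $\cM_n(\F_q)$ in the block case and via $\LEquiv_{\Vector}^{\SD}(\Matlm)$ in the matrix case — since this is exactly the refinement from Section~\ref{equivmapsselfdual} that makes the equivalence group a subgroup of $G$, and without it the double-coset characterization would break down.
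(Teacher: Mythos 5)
Your proposal is correct and follows essentially the same route as the paper: translate each notion of self-dual equivalence into right multiplication by an element of the appropriate subgroup $K\leq\GO_n(\F_q)$, then observe that $C_{\Canon}A$ and $C_{\Canon}B$ lie in the same $K$-orbit exactly when the relevant product lands in $\LGOStab(C_{\Canon})$, i.e.\ when $A$ and $B$ share an $\LGOStab(C_{\Canon})$-$K$ double coset. The only differences are cosmetic — you unify the two parts under one abstract $H$-$K$ lemma and you are slightly more explicit than the paper in noting that the stabilizing product lies in $\GO_n(\F_q)$ (hence in $\LGOStab$ rather than merely $\LStab$) — so no further comment is needed.
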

\begin{proof}
By definition, the self-dual codes $C$ and $\widehat{C}$ are linearly block-equivalent if and only if there exists a map $f \in \LEquiv_{\Block}^{\SD}(\F_q^n)$ such that $C= f(\widehat{C})$ or equivalently if and only if there exists an $M \in \cM_n(\F_q)$ such that $C=\hC M$, i.e.  
\[
C_{\Canon}A= C_{\Canon}BM.
\]
This equality holds if and only if $C_{\Canon}= C_{\Canon}BMA^{-1},$ which occurs precisely when $BMA^{-1}$ fixes $C_{\Canon}$, i.e. when
 \[
BMA^{-1} \in \LGOStab(C_{\Canon}).
\]
There exists an $N \in \LGOStab(C_{\Canon})$ such that $BMA^{-1} = N$ if and only if $B = N A M^{-1}$ 
 if and only if 
 \[
 B \in \LGOStab(C_{\Canon}) A \cM_n(\F_q),
 \]
 i.e., $A$ and $B$ are in the same $\LGOStab(C_{\Canon})$-$\cM_n(\F_q)$ double coset of $\GO_n(\F_q)$.  

By Definition \ref{matrixequivalenceSD}, $\rho^{-1}(C)$ is linearly matrix-equivalent to $\rho^{-1}(\widehat{C})$ if and only if there exists an $f~\in~\LEquiv_{\Vector}^{\SD}(\F_q^{l \times m})$ such that $C=f(\hC)$.  By Proposition \ref{LEquivSDVec}, such an $f$ exists if and only if there exists an $M \in \LEquiv_{\Vector}^{\SD}(\Matlm)$ such that $C= \hC M$, and so replacing $\cM_n(\F_q)$ with $\LEquiv_{\Vector}^{\SD}(\Matlm)$ in the previous argument gives the analogous result for matrix-equivalence.
\end{proof}

%
%

We have reduced the problem of enumerating linearly inequivalent self-dual codes to the problem of determining double-coset representatives.  Janusz notes that the problem of finding double-coset representatives in large groups is also intractable, so it is not clear that this reduction has significantly aided in the classification of inequivalent self-dual matrix codes.  We find, however, that this method is more computationally feasible than brute force techniques, and so we are at least able to employ this method to enumerate the inequivalent self-dual matrix codes for small values of $q$, $l$, and $m$.  The results of this enumeration are given in Section \ref{Enumerateselfdual}.

\section{Enumeration of Linearly Matrix-Inequivalent\\ Self-Dual Matrix Codes}\label{Enumerateselfdual}
In this section, we provide a complete enumeration of the linearly inequivalent self-dual matrix codes for small parameters.  To derive this enumeration, we implemented the method outlined in Section \ref{doublecosetsmethod} in the computer algebra system Magma using a built-in function for enumerating representatives of double cosets of matrix groups.   These representatives give maps from the canonical self-dual code to a set of linearly inequivalent self-dual matrix codes.  In Tables \ref{F2SDCodes}-\ref{F5SDCodes} below, we provide generator matrices for the linearly inequivalent self-dual codes that are the images of the canonical self-dual code under these maps, which by Theorem \ref{doublecosetcharacterization} gives a complete enumeration of the linear equivalence classes of self-dual matrix codes. In addition to the equivalence class representatives, we also provide the rank-distance distribution, or equivalently, since the codes are linear, the rank-weight distribution so that we may evaluate the optimality of these codes. We provide these descriptions for every equivalence class of self-dual matrix codes over $\F_2,\F_3, \F_4,$ and $\F_5$ that was computationally feasible in Magma via the double-coset method.

We first focus on an example of the enumeration of the linearly inequivalent self-dual matrix codes.  Recall from Lemma \ref{SelfDualBlockMatrix} that a self-dual matrix code $C \subseteq \Matlm$ gives rise to a self-dual block code $\rho(C) \subseteq \F_q^{lm}$ via the map $\rho$.  Furthermore, by definition, any generator matrix for the matrix code $C$ is also a generator matrix for the block code $\rho(C)$.  Thus, it is natural to compare the linear-matrix-equivalence classes of self-dual matrix codes to the linear-block-equivalence classes of self-dual block codes of the appropriate parameters.  Here, we consider the example of linearly inequivalent self-dual matrix codes in $\F_3^{2 \times 4}$ and compare these to the linearly inequivalent self-dual block codes in $\F_3^8$.  In this context, we explicitly compare the matrix-equivalence classes to the block-equivalence classes in order to gain further insight into the relationship between block- and matrix-equivalence.  This example highlights some key differences in these notions of equivalence, which will prove valuable for comparing the enumerations in the block- and matrix-code case more generally.  

\begin{example}
Set $q=3$, $l=2$, $m=4$.  Following the method outlined above, Magma produces 13 linearly inequivalent self-dual matrix codes in $\F_3^{2 \times 4}$; the generator matrices for these codes are given in Table \ref{F3SDCodes}.  In contrast, Mallows, Pless and Sloane show that there is only one linear equivalence class of self-dual block codes in $\F_3^8$ \cite{MallowsPlessSloane1976}.  Thus, the 13 different self-dual block codes that arise from the generator matrices in Table \ref{F3SDCodes} are all linearly block-equivalent.  To gain some intuition as to why this is true, we focus on the first two linearly-inequivalent matrix codes in Table \ref{F3SDCodes} and examine them both as matrix codes and as block codes. 

Let $C_1 \subseteq \F_3^{2 \times 4}$ be the matrix code generated by $G_1$ in Table \ref{F3SDCodes}.  Similarly, let $C_2 \subseteq \F_3^{2 \times 4}$ be the matrix code generated by $G_2$ in Table \ref{F3SDCodes}.  Since these matrix codes have different rank-weight distributions, there cannot be a rank-weight-preserving map from one code to the other, and so these codes cannot be matrix equivalent.  

We can check that both $\rho(C_1)$ and $\rho(C_2)$ have Hamming-weight distributions with $A_0=1$, $A_3=16$, $A_6=64$ and the remaining $A_i$ equal to zero, where $A_i$ denotes the number of codewords of weight $i$.  Thus, it is possible that there exists a Hamming-weight-preserving map between the codes.  In fact, one can check via Magma, for example, that the matrix $M$ given by
\[
M={\scriptsize \begin{bmatrix}  1&0&0&0&0&0&0&0\\
 0&0&1&0&0&0&0&0\\
 0&0&0&0&0&0&2&0\\
 0&2&0&0&0&0&0&0\\
 0&0&0&0&2&0&0&0\\
 0&0&0&0&0&0&0&1\\
 0&0&0&0&0&2&0&0\\
 0&0&0&2&0&0&0&0
\end{bmatrix}} \in \GL_8(\F_3)
\]
maps $\rho(C_1)$ to $\rho(C_2)$, and since $M$ is a monomial matrix of 1s and 2s, i.e. 1s and $-1$s over $\F_3$, $M$ is a linear block-equivalence map.  

Intuitively, $\rho(C_1)$ and $\rho(C_2)$ are linearly block-equivalent because it is possible to permute and scale the entries of one code to obtain the other, and the actions of permuting and scaling entries do not affect the Hamming weight of a vector; however, these same actions will affect the rank weight of the matrix obtained from the extended row vector, and so the corresponding matrix codes $C_1$ and $C_2$ are not equivalent.  

In general, when we restrict to linear equivalence maps for self-dual codes, there are significantly more block-equivalence maps for self-dual codes than there are matrix-equivalence maps for self-dual codes.  In this example, applying a simple counting argument\footnote{Formulas derived from such a counting argument are available in \cite{MyThesis} for the interested reader.} and a formula for the size of $O_n(\F_q)$ \cite{MacWilliamsOrthogonal} to the groups of equivalence maps described in Definition \ref{scalarmondef} and Proposition \ref{LEquivSDVec}, we see that 
\[
|\LEquiv_{\Block}^{\SD}(\F_3^8)| = 10,321,920 > 18,432 = |\LEquiv_{\Mat}^{\SD}(\F_3^{2 \times 4})|
\]
demonstrating that linear block-equivalence for self-dual codes is much broader than linear matrix-equivalence for self-dual codes.  Thus, we would expect many more equivalence classes for self-dual matrix codes than for self-dual block codes.  

It is worth noting that while there are significantly more linear block-equivalence maps for self-dual codes than there are linear matrix-equivalence maps for self-dual codes, this phenomenon does not hold true for general linear block-equivalence and linear matrix-equivalence maps.  If we do not restrict to those maps that commute with the dual, we see that we do not gain any linear block-equivalence maps because all monomial matrices over $\F_3$ consist of only $\pm 1$, but we have drastically more linear matrix-equivalence maps:
\[
|\LEquiv_{\Block}(\F_3^8)| = 10,321,920 < 582,266,880 = |\LEquiv_{\Mat}(\F_3^{2 \times 4})|.
\]
Thus, in the general case, we would expect to see far fewer linear matrix-equivalence classes than linear block-equivalence classes, even though this behavior is reversed in the case of self-dual codes.
\end{example}

Another important question to consider is whether any self-dual codes are optimal with respect to their minimum distance.  The analogue of the Singleton bound for matrix codes \cite{Delsarte} guarantees that for any $l \times m$ matrix code, we must have
\[
k \leq \min\{l, m\}(\max\{l, m\} - d +1)
\]
where $k$ is the dimension of the code and $d$ is its minimum distance. Using that $l \leq m$ and $k =\dfrac{l m}{2}$, this simplifies to $d \leq \frac{l}{2} +1$.  In the enumerations that follow, $l=2$, and so the minimum distance of any \emph{maximum rank distance} (MRD) code, i.e.\ any code that meets the analogue of the Singleton bound, with these parameters is 2.  As the tables below indicate, there are a number of parameters for which there is a self-dual code that is MRD; however, for some parameters, no self-dual codes achieve this upper bound.  In the latter case, the self-dual codes are all suboptimal, and there exist non-self dual codes with the same parameters that are MRD.\footnote{MRD codes can be constructed for any parameters using a modification of a construction from \cite{Blaum}.}  It is not yet clear why/when certain code parameters will not yield any self-dual codes that are MRD, but this is an interesting question for future research.

\section{Conclusion and Future Work}
In this paper, we developed a method to enumerate all inequivalent self-dual matrix codes of relatively short length over small finite fields.
This method is only computationally feasible for particularly small parameters because generating a set of representatives of double cosets is also computationally intractable for large parameters.  An open question in this area is whether this method may be modified to produce an enumeration of self-dual codes over larger fields and/or larger dimensional vector spaces.  One possible modification appeals to the existing enumeration of block-inequivalent self-dual block codes.  It is possible that we may exploit the relationship between the block-equivalence and matrix-equivalence maps to generate representatives of the equivalence classes determined by the intersection of the groups of block-equivalence and matrix-equivalence maps using the enumeration of block-inequivalent self-dual codes as a starting point.  This set of representatives would contain all the representatives of the matrix-equivalence classes as well as some potentially extraneous codes, but it should be computationally feasible to then weed through that list of representatives to find the subset of matrix-inequivalent codes.  


\begin{table}[ht]
\caption{Enumeration of Linearly Matrix-Inequivalent Self-Dual Matrix Codes over $\F_2$}
\centering
{\footnotesize
\begin{tabular}{ccccccl}
\hline
\hline
\vspace{-.1in}
\\
\vspace{.03in} 
$q$&$l$&$m$&$n$&
{\footnotesize \begin{tabular}{c} Number of \\Equiv.\ Classes\end{tabular}}&  
{\footnotesize \begin{tabular}{c} Equivalence Class\\Representatives \end{tabular}} \phantom{h}&\hspace{.03in} {\footnotesize \begin{tabular}{c} Weight \\ Distributions\end{tabular}}
\\
\hline 
\vspace{-.08in}\\
\vspace{.05in}
2&2&2&4&2&\hspace{-.15in}{\scriptsize $ G_{1}=\begin{bmatrix} 1&0&1&0\\
 0&1&0&1
\end{bmatrix}$}
&{\small $ A_0=1, A_{1}=3$}\\

\vspace{.1in}&&&&&\hspace{-.15in}{\scriptsize $ G_{2}=\begin{bmatrix} 1&0&0&1\\
 0&1&1&0
\end{bmatrix}$}
&{\small $ A_0=1, A_{1}=1, A_{2}=2$}
\\
\hline 
\vspace{-.08in}\\

2&2&3&6&5& 
\vspace{-.16in}\\
\vspace{.05in}&&&&&\hspace{-.15in}{\scriptsize $ G_{1}=\begin{bmatrix} 1&0&0&1&0&0\\
 0&1&0&0&1&0\\
 0&0&1&0&0&1
\end{bmatrix}$} 
&{\small $ A_0=1, A_{1}=7$} \\

\vspace{.05in}&&&&&\hspace{-.15in}{\scriptsize $ G_{2}=\begin{bmatrix} 0&0&1&1&0&0\\
 1&1&0&0&0&0\\
 0&0&0&0&1&1
\end{bmatrix}$} 
&{\small $ A_0=1, A_{1}=3, A_{2}=4$} \\

\vspace{.05in}&&&&&\hspace{-.15in}{\scriptsize $ G_{3}=\begin{bmatrix} 1&0&0&0&0&1\\
 0&1&0&1&0&0\\
 0&0&1&0&1&0
\end{bmatrix}$} 
&{\small $ A_0=1, A_{1}=1, A_{2}=6$} \\

\vspace{.05in}&&&&&\hspace{-.15in}{\scriptsize $ G_{4}=\begin{bmatrix} 1&0&0&1&0&0\\
 0&1&1&0&0&0\\
 0&0&0&0&1&1
\end{bmatrix}$} 
&{\small $ A_0=1, A_{1}=5, A_{2}=2$} \\

\vspace{.1in}&&&&&\hspace{-.15in}{\scriptsize $ G_{5}=\begin{bmatrix} 1&0&0&0&0&1\\
 0&1&0&0&1&0\\
 0&0&1&1&0&0
\end{bmatrix}$} 
&{\small $ A_0=1, A_{1}=3, A_{2}=4$}\\

\\
\hline
\hline
\end{tabular}}
\label{F2SDCodes}
\end{table}

\begin{table}[ht]
\vspace{-.1in}
\caption{Enumeration of Linearly Matrix-Inequivalent Self-Dual Matrix Codes over $\F_3$}
\centering
{\footnotesize
\begin{tabular}{cccccrl}
\hline
\hline \vspace{-.1in}
\\
\vspace{.03in} 
$q$&$l$&$m$&$n$&
{\footnotesize \begin{tabular}{c} Number of \\Equiv.\ Classes\end{tabular}}&  
{\footnotesize \begin{tabular}{c} Equivalence Class\\Representatives \end{tabular}} \phantom{hell}&\hspace{.03in} {\footnotesize \begin{tabular}{c} Weight \\ Distributions\end{tabular}}
\\
\hline 
\vspace{-.08in}\\
\vspace{.1in}
 3&2&2&4& 1& {\scriptsize $ G_1=\begin{bmatrix}1&0&2&2\\
 0&1&2&1\end{bmatrix}$}\phantom{hello!!}
 & {\small $A_0=1,~A_2=8$}\\
 
\hline 
\vspace{-.06in}\\

3&2&4&8&13& 
\vspace{-.2in}\\
&&&&&\hspace{-.28in}{\scriptsize $ G_1=\begin{bmatrix}1&0&0&0&2&0&2&0\\
 0&1&0&0&0&2&0&2\\
 0&0&1&0&2&0&1&0\\
 0&0&0&1&0&2&0&1\end{bmatrix}$}
& {\small $A_0=1,~ A_2=80$} \\

\vspace{.05in} &&&&&\hspace{-.3in} {\scriptsize $ G_2=\begin{bmatrix}1&0&1&1&2&0&2&2\\
 0&1&0&1&1&2&2&1\\
 0&0&1&1&0&0&0&2\\
 0&0&0&0&1&2&2&0\end{bmatrix}$}
& {\small $A_0=1,~ A_1=8,~A_2=72$} \\

\vspace{.05in} &&&&&\hspace{-.3in} {\scriptsize $ G_3=\begin{bmatrix}1&0&1&1&1&0&2&2\\
 0&2&2&1&1&1&1&0\\
 0&0&0&2&0&0&1&1\\
 0&0&2&1&1&1&2&2\end{bmatrix}$}
& {\small $A_0=1,~ A_2=80$} \\

\vspace{.05in} &&&&&\hspace{-.3in} {\scriptsize $ G_4=\begin{bmatrix}1&0&0&0&0&0&1&1\\
 0&1&0&2&2&1&2&1\\
 0&0&0&2&0&0&2&1\\
 0&0&1&0&1&1&0&0\end{bmatrix}$}
&{\small $A_0=1,~A_1=4,~A_2=76$}\\

\vspace{.05in} &&&&&\hspace{-.3in} {\scriptsize $ G_5=\begin{bmatrix}1&0&0&1&1&2&1&2\\
 0&2&2&0&2&2&1&1\\
 0&0&0&1&0&0&1&2\\
 0&0&2&0&0&0&1&1\end{bmatrix}$}
& {\small $A_0=1,~ A_2=80$} \\

\vspace{.05in} &&&&&\hspace{-.3in}{\scriptsize $ G_6=\begin{bmatrix}1&0&0&2&1&0&0&0\\
 0&2&1&0&0&2&0&0\\
 0&0&0&1&1&0&0&2\\
 0&0&2&0&0&2&1&0\end{bmatrix}$}
& {\small $A_0=1,~ A_2=80$} \\

\vspace{.05in} &&&&&\hspace{-.3in} {\scriptsize $ G_7=\begin{bmatrix}1&0&1&1&2&0&2&2\\
 0&2&1&2&0&1&2&1\\
 0&0&0&1&2&0&2&0\\
 0&0&1&0&0&1&0&1\end{bmatrix}$}
&{\small $A_0=1,~A_1=16,~A_2=64$}\\

\vspace{.05in} &&&&&\hspace{-.3in} {\scriptsize $ G_8=\begin{bmatrix}1&0&1&1&1&0&2&2\\
 0&1&2&1&0&1&2&1\\
 0&0&1&1&1&0&0&0\\
 0&0&0&0&0&1&2&1\end{bmatrix}$}
&{\small $A_0=1,~A_1=8,~A_2=72$}\\

&&&&&\hspace{-.28in}{\scriptsize $ G_9=\begin{bmatrix}1&0&2&2&2&0&1&1\\
 0&2&2&1&0&2&2&1\\
 0&0&0&0&2&0&1&1\\
 0&0&0&0&0&2&2&1\end{bmatrix}$}
&{\small $A_0=1,~A_1=32,~A_2=48$}\\

\vspace{.05in} &&&&&\hspace{-.3in}{\scriptsize $ G_{10}=\begin{bmatrix}1&0&0&0&2&0&2&0\\
 0&1&0&0&0&2&0&2\\
 0&0&2&0&2&0&1&0\\
 0&0&0&1&0&2&0&1\end{bmatrix}$}
&{\small $A_0=1,~A_2=80$}\\

\vspace{.05in} &&&&&\hspace{-.3in} {\scriptsize $ G_{11}=\begin{bmatrix}1&0&2&1&2&0&2&2\\
 0&1&0&1&1&2&2&1\\
 0&0&2&1&0&0&0&2\\
 0&0&0&0&1&2&2&0\end{bmatrix}$}
&{\small $A_0=1,~A_1=8,~A_2=72$}\\

\vspace{.05in} &&&&&\hspace{-.3in}{\scriptsize $ G_{12}=\begin{bmatrix}1&0&0&0&0&0&1&1\\
 0&1&0&2&2&1&2&1\\
 0&0&0&2&0&0&2&1\\
 0&0&2&0&1&1&0&0\end{bmatrix}$}
&{\small $A_0=1,~A_1=4,~A_2=76$}\\

\vspace{.1in} &&&&&\hspace{-.3in}{\scriptsize $ G_{13}=\begin{bmatrix}1&0&1&1&1&0&1&1\\
 0&1&1&2&0&1&2&1\\
 0&0&0&0&1&0&1&1\\
 0&0&0&0&0&1&2&1\end{bmatrix}$}
&{\small $A_0=1,~A_1=20,~A_2=60$}
\\
%
\hline
\hline 
\end{tabular}}
\label{F3SDCodes}
\end{table}

\begin{table}[ht]
\caption{Enumeration of Linearly Matrix-Inequivalent Self-Dual Matrix Codes over $\F_4~=~\F_2[\alpha]$ where $\alpha^2+\alpha+1=0$}
\centering
{\footnotesize
\begin{tabular}{cccccrl}
\hline
\hline \vspace{-.1in}
\\
\vspace{.03in} 
$q$&$l$&$m$&$n$&
{\footnotesize \begin{tabular}{c} Number of \\Equiv.\ Classes\end{tabular}}&  
{\footnotesize \begin{tabular}{c} Equivalence Class\\Representatives \end{tabular}} &\hspace{.03in} {\footnotesize \begin{tabular}{c} Weight \\ Distributions\end{tabular}}
\\
\hline \vspace{-.08in}\\
\vspace{-.15in}
4&2&2&4&3\\
\vspace{.05in}&&&&&{\scriptsize $ G_{1}=\begin{bmatrix}\phantom{l}1\phantom{l}&\phantom{l}0\phantom{l}&\phantom{l}1\phantom{l}&\phantom{l}0\phantom{l}\\
 0&1&0&1\end{bmatrix}$}
&{\small $ A_{0}=1, A_{1}=15$}\\

\vspace{.05in}&&&&&{\scriptsize $ G_{2}=\begin{bmatrix}\phantom{l}1\phantom{l}&\phantom{l}0\phantom{l}&\alpha&\alpha^2\\
 0&1&\alpha^2&\alpha\end{bmatrix}$}
&{\small $ A_{0}=1, A_{1}=3, A_{2}=12$}\\

\vspace{.1in}&&&&&{\scriptsize $ G_{3}=\begin{bmatrix}\phantom{l}1\phantom{l}&\phantom{l}0\phantom{l}&\phantom{l}0\phantom{l}&\phantom{l}1\phantom{l}\\
 0&1&1&0\end{bmatrix}$}
&{\small $ A_{0}=1, A_{1}=3, A_{2}=12$}\\

\hline
\hline
\end{tabular}}
\label{F4SDCodesLinear}
\end{table}

\begin{table}[ht]
\caption{Enumeration of Linearly Matrix-Inequivalent Self-Dual Matrix Codes over $\F_5$}
\centering
{\footnotesize
\begin{tabular}{cccccrl}
\hline
\hline \vspace{-.1in}
\\
\vspace{.03in} 
$q$&$l$&$m$&$n$&
{\footnotesize \begin{tabular}{c} Number of \\Equiv.\ Classes\end{tabular}}&  
{\footnotesize \begin{tabular}{c} Equivalence Class\\Representatives \end{tabular}} \phantom{hel}&\hspace{.03in} {\footnotesize \begin{tabular}{c} Weight \\ Distributions\end{tabular}}
\\
\hline 
\vspace{-.08in}\\
\vspace{.05in}
5&2&2&4&2&{\scriptsize $ G_1=\begin{bmatrix}1&0&3&0\\
 0&1&0&3\end{bmatrix}$} \phantom{hello!!}&
{\small $A_0=1,~A_1=24$}\\
\vspace{.1in} &&&&& {\scriptsize $ G_2=\begin{bmatrix}0&0&4&2\\
 2&4&2&1\end{bmatrix}$} \phantom{hello!!}&
{\small $A_0=1,~A_1=8,~A_2=16$}
\\

\hline \vspace{-.05in}\\

\vspace{-.17in}
5&2&3&6&7&\\
\vspace{.05in} &&&&& \hspace{-.5in}{\scriptsize $ G_1=\begin{bmatrix}1&0&0&3&0&0\\
 0&1&0&0&3&0\\
 0&0&1&0&0&3\end{bmatrix}$}\phantom{hel}
&{\small $A_0=1,~A_1=124$}
\\
\vspace{.05in} &&&&& \hspace{-.5in}{\scriptsize $ G_2=\begin{bmatrix}1&0&3&2&1&0\\
 0&1&4&1&1&1\\
 0&0&3&1&4&3\end{bmatrix}$}\phantom{hel}
&{\small $A_0=1,~A_1=4,~A_2=120$}
\\
\vspace{.05in} &&&&& \hspace{-.5in}{\scriptsize $ G_3=\begin{bmatrix}1&0&3&0&1&2\\
 0&1&1&2&2&0\\
 0&0&2&3&1&4\end{bmatrix}$}\phantom{hel}
&{\small $A_0=1,~A_1=4,~A_2=120$}
\\
\vspace{.05in} &&&&& \hspace{-.5in}{\scriptsize $ G_4=\begin{bmatrix}1&0&2&4&2&0\\
 0&4&4&1&4&1\\
 0&0&3&3&1&1\end{bmatrix}$}\phantom{hel}
&{\small $A_0=1,~A_1=28,~A_2=96$}
\\
\vspace{.05in} &&&&& \hspace{-.5in}{\scriptsize $ G_5=\begin{bmatrix}1&0&0&3&0&0\\
 0&1&2&0&0&0\\
 0&0&0&0&1&3\end{bmatrix}$}\phantom{hel}
&{\small $A_0=1,~A_1=12,~A_2=112$}
\\
\vspace{.05in} &&&&& \hspace{-.5in}{\scriptsize $ G_6=\begin{bmatrix}1&0&3&2&1&0\\
 0&4&4&1&1&1\\
 0&0&3&1&4&3\end{bmatrix}$}\phantom{hel}
&{\small $A_0=1,~A_1=28,~A_2=96$}
\\
\vspace{.1in} &&&&& \hspace{-.5in}{\scriptsize $ G_7=\begin{bmatrix}1&0&3&0&0&0\\
 0&4&0&0&3&0\\
 0&0&0&1&0&3\end{bmatrix}$}\phantom{hel}
&{\small $A_0=1,~A_1=44,~A_2=80$}
\\


\hline
\hline
\end{tabular}}

\label{F5SDCodes1}
\end{table}

\begin{table}[ht]
\caption{Enumeration of Linearly Matrix-Inequivalent Self-Dual Matrix Codes over $\F_5$ (Continued)}
\centering
{\footnotesize \begin{tabular}{cccccrl}
\hline
\hline \vspace{-.1in}
\\
\vspace{.03in} 
$q$&$l$&$m$&$n$&
{\footnotesize \begin{tabular}{c} Number of \\Equiv.\ Classes\end{tabular}}&  
{\footnotesize \begin{tabular}{c} Equivalence Class\\Representatives \end{tabular}} \phantom{hel}&\hspace{.03in} {\footnotesize \begin{tabular}{c} Weight \\ Distributions\end{tabular}}
\\
\hline \vspace{-.08in}\\
\vspace{-.18in}
5&2&4&8&24&\\
\vspace{.05in}&&&&&\hspace{-.5in}{\scriptsize $ G_{1}=\begin{bmatrix}1&0&0&0&3&0&0&0\\
 0&1&0&0&0&3&0&0\\
 0&0&1&0&0&0&3&0\\
0&0&0&1&0&0&0&3\end{bmatrix}$}  
&{\small $A_0=1,~A_1=624$}\\

\vspace{.05in}&&&&&\hspace{-.5in}{\scriptsize $ G_{2}=\begin{bmatrix}  1&2&3&4&1&1&3&2\\
 1&1&4&4&3&0&1&4\\
 1&4&0&1&4&1&1&2\\
1&0&4&0&3&2&1&2\end{bmatrix}$}  
&{\small $A_0=1,~A_1=48,~A_2=576$}\\

\vspace{.05in}&&&&&\hspace{-1in}{\scriptsize $ G_{3}=\begin{bmatrix}  2&4&1&3&3&2&1&4\\
 4&3&2&3&3&1&4&1\\
 4&4&0&1&4&1&1&2\\
2&3&1&1&0&1&3&0\end{bmatrix}$}  
&{\small $A_0=1,~A_1=48,~A_2=576$}\\

\vspace{.05in}&&&&&\hspace{-.5in}{\scriptsize $ G_{4}=\begin{bmatrix}  2&0&0&0&0&3&4&1\\
 3&1&4&4&3&3&0&0\\
 1&0&4&3&0&3&2&1\\
0&0&4&0&2&1&3&0\end{bmatrix}$} 
&{\small $A_0=1,~A_1=24,~A_2=600$}\\

\vspace{.05in}&&&&&\hspace{-.5in}{\scriptsize $ G_{5}=\begin{bmatrix}  2&4&4&1&2&1&3&2\\
 1&3&1&2&1&3&0&0\\
 1&3&2&0&0&1&1&2\\
1&2&1&3&1&0&0&3\end{bmatrix}$} 
&{\small $A_0=1,~A_1=32,~A_2=592$}\\

\vspace{.05in}&&&&&\hspace{-.5in}{\scriptsize $ G_{6}=\begin{bmatrix}  0&1&1&4&1&2&1&4\\
 1&1&0&0&3&3&0&0\\
 2&3&3&1&4&1&1&2\\
 3&3&2&2&2&1&3&0\end{bmatrix}$} 
&{\small $A_0=1,~A_1=64,~A_2=560$}\\

\vspace{.05in}&&&&&\hspace{-.5in}{\scriptsize $ G_{7}=\begin{bmatrix}  3&4&1&3&1&1&3&2\\
 3&3&2&3&0&3&0&0\\
 4&1&3&0&0&3&2&1\\
 0&4&0&3&2&3&4&4\end{bmatrix}$} 
&{\small $A_0=1,~A_1=24,~A_2=600$}\\

\vspace{.05in}&&&&&\hspace{-.5in}{\scriptsize $ G_{8}=\begin{bmatrix}  4&4&0&2&0&1&3&2\\
 4&1&0&0&0&1&4&1\\
 2&0&1&0&2&2&4&4\\
 3&0&3&4&1&1&3&0\end{bmatrix}$} 
&{\small $A_0=1,~A_2=624$}\\

\vspace{.05in}&&&&&\hspace{-.5in}{\scriptsize $ G_{9}=\begin{bmatrix}  0&0&3&3&1&2&1&4\\
 3&2&3&1&3&1&4&1\\
 0&1&4&1&1&2&4&4\\
 3&3&1&1&1&2&1&2\end{bmatrix}$} 
&{\small $A_0=1,~A_1=16,~A_2=608$}\\

\vspace{.05in}&&&&&\hspace{-.5in}{\scriptsize $ G_{10}=\begin{bmatrix}  4&1&3&1&2&4&2&3\\
 3&4&4&3&1&3&0&0\\
 0&3&0&3&1&1&1&2\\
 3&4&2&0&4&1&3&0\end{bmatrix}$} 
&{\small $A_0=1,~A_1=24,~A_2=600$}\\

\vspace{.05in}&&&&&\hspace{-.5in}{\scriptsize $ G_{11}=\begin{bmatrix}  0&2&4&0&2&2&1&4\\
 1&0&3&0&1&3&0&0\\
 4&3&0&3&0&2&4&4\\
 2&2&3&0&3&2&1&2\end{bmatrix}$} 
&{\small $A_0=1,~A_2=624$}\\

\vspace{.1in}&&&&&\hspace{-1in}{\scriptsize $ G_{12}=\begin{bmatrix}  3&4&2&4&1&1&3&2\\
 3&0&4&1&4&1&4&1\\
 2&4&2&3&4&1&1&2\\
 3&3&2&2&2&1&3&0\end{bmatrix}$} 
&{\small $A_0=1,~A_1=8,~A_2=616$}\\

\\\hline
\hline
\end{tabular}}
\label{F5SDCodes2}
\end{table}

\begin{table}[ht]
\caption{Enumeration of Linearly Matrix-Inequivalent Self-Dual Matrix Codes over $\F_5$ (Continued)}
\centering
{\footnotesize \begin{tabular}{cccccrl}
\hline
\hline \vspace{-.1in}
\\
\vspace{.03in} 
$q$&$l$&$m$&$n$&
{\footnotesize \begin{tabular}{c} Number of \\Equiv.\ Classes\end{tabular}}&  
{\footnotesize \begin{tabular}{c} Equivalence Class\\Representatives \end{tabular}} \phantom{hel}&\hspace{.03in} {\footnotesize \begin{tabular}{c} Weight \\ Distributions\end{tabular}}
\\
\hline \vspace{-.08in}\\
\vspace{-.18in}
5&2&4&8&continued&\\

\vspace{.05in}&&&&&\hspace{-.5in}{\scriptsize $ G_{13}=\begin{bmatrix}  4&1&1&4&4&0&0&0\\
 2&1&1&1&4&2&2&3\\
 3&3&2&0&2&0&3&0\\
 2&4&3&1&4&2&1&2\end{bmatrix}$} 
&{\small $A_0=1,~A_2=624$}\\

\vspace{.05in}&&&&&\hspace{-.5in}{\scriptsize $ G_{14}=\begin{bmatrix}  0&1&0&3&4&1&3&2\\
 3&1&1&1&2&3&0&0\\
 2&0&1&0&4&0&3&0\\
 0&3&3&3&3&0&0&3\end{bmatrix}$} 
&{\small $A_0=1,~A_2=624$}\\

\vspace{.05in}&&&&&\hspace{-.5in}{\scriptsize $ G_{15}=\begin{bmatrix}  0&3&1&0&0&0&0&0\\
 4&0&0&2&4&4&3&2\\
 3&3&1&4&3&2&4&4\\
 2&2&4&1&1&0&0&3\end{bmatrix}$} 
&{\small $A_0=1,~A_1=144,~A_2=480$}\\

\vspace{.05in}&&&&&\hspace{-.5in}{\scriptsize $ G_{16}=\begin{bmatrix}  2&4&1&3&2&2&1&4\\
 3&4&2&1&1&3&0&0\\
 4&4&2&3&2&1&1&2\\
 3&4&0&3&0&4&2&1\end{bmatrix}$} 
&{\small $A_0=1,~A_2=624$}\\

\vspace{.05in}&&&&&\hspace{-.5in}{\scriptsize $ G_{17}=\begin{bmatrix}  4&4&2&4&3&1&3&2\\
 0&0&1&3&1&3&0&0\\
 4&3&2&0&0&1&1&2\\
 0&4&2&0&0&1&3&0\end{bmatrix}$} 
&{\small $A_0=1,~A_1=48,~A_2=576$}\\

\vspace{.05in}&&&&&\hspace{-.5in}{\scriptsize $ G_{18}=\begin{bmatrix}  1&0&0&0&3&0&0&0\\
 2&4&0&1&1&0&3&3\\
 2&3&1&1&1&2&1&3\\
 0&0&0&1&0&0&0&3\end{bmatrix}$} 
&{\small $A_0=1,~A_1=128,~A_2=496$}\\

\vspace{.05in}&&&&&\hspace{-.5in}{\scriptsize $ G_{19}=\begin{bmatrix}  1&2&3&4&1&1&3&2\\
 2&0&4&2&1&1&0&3\\
 1&4&0&1&4&1&1&2\\
 0&1&4&2&0&1&2&3\end{bmatrix}$} 
&{\small $A_0=1,~A_1=8,~A_2=616$}\\

\vspace{.05in}&&&&&\hspace{-.5in}{\scriptsize $ G_{20}=\begin{bmatrix}  0&1&1&2&2&0&3&1\\
 4&3&2&3&3&1&4&1\\
 2&1&0&0&3&4&3&4\\
 2&3&1&1&0&1&3&0\end{bmatrix}$} 
&{\small $A_0=1,~A_1=8,~A_2=616$}\\

\vspace{.05in}&&&&&\hspace{-1in}{\scriptsize $ G_{21}=\begin{bmatrix}  2&0&4&2&3&4&0&4\\
 0&0&1&3&1&3&0&0\\
 4&0&2&2&1&2&0&1\\
 2&1&0&0&4&2&2&4\end{bmatrix}$} 
&{\small $A_0=1,~A_1=16,~A_2=608$}\\

\vspace{.05in}&&&&&\hspace{-.5in}{\scriptsize $ G_{22}=\begin{bmatrix}  1&4&2&0&4&3&0&3\\
 4&1&4&3&2&0&0&2\\
 3&4&0&0&2&1&0&0\\
 2&3&3&2&4&4&1&4\end{bmatrix}$} 
&{\small $A_0=1,~A_1=24,~A_2=600$}\\

\vspace{.05in}&&&&&\hspace{-.5in}{\scriptsize $ G_{23}=\begin{bmatrix}  0&4&3&0&2&3&4&1\\
 2&0&0&1&2&2&1&1\\
 2&4&3&1&4&4&2&2\\
 2&4&3&1&3&3&4&4\end{bmatrix}$} 
&{\small $A_0=1,~A_1=64,~A_2=560$}\\

\vspace{.05in}&&&&&\hspace{-.5in}{\scriptsize $ G_{24}=\begin{bmatrix}  0&4&2&3&1&0&4&3\\
 4&3&4&0&1&1&4&1\\
 4&0&4&1&3&3&0&2\\
 1&0&0&1&3&0&0&3\end{bmatrix}$} 
&{\small $A_0=1,~A_1=144,~A_2=480$}\\
\\\hline
\hline
\end{tabular}}
\label{F5SDCodes}
\end{table}

\FloatBarrier

\appendix
\section*{Appendix: Details of Implementation of $\GO_n(\F_q)$ in Magma}
\label{GOncomputation}
\subsection*{Construction of $\GO_n(\F_q)$ for odd $q$}
Magma implements the classical group theory definition of the orthogonal group, which preserves a bilinear form that is not the dot product; the group that preserves that bilinear form modulo scalars is referred to as the \emph{conformal orthogonal group} $\CO_n(\F_q)$.  When $n$ is odd, there is precisely one conformal orthogonal group for the specified bilinear form, and this group is isomorphic to $\GO_n(\F_q)$ as we define it.  When $n$ is even, there are two conformal orthogonal groups $\CO_n^+(\F_q)$ and $\CO_n^-(\F_q)$, where the plus and minus refer to the Witt defect of the underlying bilinear form \cite[p. 136-141]{Taylor}.  The Witt defect of the dot product determines which of these groups is isomorphic to $\GO_n(\F_q)$.  Following the background given in \cite{Taylor}, we see that 
\begin{itemize}
\item if $q \equiv 1 \pmod 4$, then $\GO_{n}(\F_q)$ is isomorphic to $\CO_n^+(\F_q)$,
\item if $q \equiv 3 \pmod 4$ and $n \equiv 0 \pmod 4$, then $\GO_{2m}(\F_q)$ is isomorphic to $\CO_{2m}^+(\F_q)$, and
\item if $q \equiv 3 \pmod 4$ and $n \equiv 2 \pmod 4$, then $\GO_{2m}(\F_q)$ is isomorphic to $\CO_{2m}^-(\F_q)$.
\end{itemize}
\subsection*{Construction of $\GO_n(\F_{q})$ for even $q$}
Over characteristic 2, the dot product is a singular bilinear form whereas the bilinear form defining $\CO_n(\F_q)$ is nonsingular.  Thus, these groups cannot be isomorphic, and so we resort to a random construction of $\GO_{n}(\F_q)$ in this case.  Specifically, we generate a random element  $A \in \GL_n(\F_{q})$ and test whether it satisfies $A A^\top = \lambda I_n$ for some $\lambda \in \F_q^*$.  If $A$ satisfies this condition then we include it in the list of generators for $\GO_{n}(\F_q)$. We continue this process until the cardinality of the subgroup generated by this list of generators equals the cardinality of $\GO_n(\F_{q})$, which can be obtained using formulas for the size of $\O_n(\F_{2^e})$ from \cite{MacWilliamsOrthogonal} and the fact that $\GO_n(\F_{2^e})$ is the normalizer of $\O_n(\F_{2^e})$ in $\GL_n(\F_{2^e})$\footnote{A precise formula for the size of $\GO_n(\F_{2^e})$ is available in \cite{MyThesis}.}.

\section*{Acknowledgements}
The author would like to express her sincere gratitude to the reviewers for their significant advice on ways to streamline the paper and dramatically simplify some proofs.  She would also like to acknowledge that this work stemmed from her Ph.D. thesis, which was completed under the supervision of Judy L.\ Walker in the Department of Mathematics at  University of Nebraska-Lincoln.  Additionally, the author was supported by the following sources during the completion of this work: NSF grants DMS-0735099, DMS-0903517 and DMS-0838463, as well as DOE grant P200A090002. 

\bibliographystyle{plain}
 \bibliography{Thesis_Bibliography}

\end{document}